\theoremstyle{plain}
\newtheorem{theorem}{Theorem}
\newtheorem{lemma}[theorem]{Lemma}
\newtheorem{fact}[theorem]{Fact}
\newtheorem{corollary}[theorem]{Corollary}
\newtheorem{proposition}[theorem]{Proposition}
\theoremstyle{definition}
\newtheorem{definition}[theorem]{Definition}
\newcommand*{\1}{\mathbbm{1}}
\newcommand*{\PP}{\mathop{\mathbb{P}}}
\newcommand*{\E}{\mathop{\mathbb{E}}}
\newcommand*{\R}{\mathbb{R}}
\newcommand{\F}{\mathbb{F}}
\newcommand*{\eps}{\varepsilon}
\def\Fnn{\F_2^{2n}}
\newcommand{\defeq}{\overset{\textrm{\tiny def}}{=}}
\title{Tolerant testing of stabilizer states with a polynomial gap via a generalized uncertainty relation}
\author[1]{Zongbo (Bob) Bao}
\author[2]{Philippe van Dordrecht}
\author[1]{Jonas Helsen}
\affil[1]{QuSoft and CWI, Amsterdam, The Netherlands}
\affil[2]{QuSoft and University of Amsterdam, Amsterdam, The Netherlands}
\date{Feb 21, 2025}
\begin{document}
\maketitle

\begin{abstract}
We prove a conjecture of Arunachalam \& Dutt \cite{arunachalam2024tolerant} on the existence of a tolerant stabilizer testing algorithm, and achieve an \emph{exponential} improvement in the parameters of the tester. Key to our argument is a generalized uncertainty relation for sets of Pauli operators, based on the Lov{\'a}sz theta function. 
\end{abstract}
\section{Introduction}
We consider the problem of stabilizer testing: given copies of a state $\ket{\psi}$ with the promise that it is either $\epsilon_1$-close to a stabilizer state (as measured by state fidelity) or $\epsilon_2$-far away, decide which one is the case. This problem has garnered significant interest in recent years following the introduction of the Bell difference sampling algorithm in \cite{gross2021schur}. There it was proven that Bell difference sampling gives rise to a one-sided tester (i.e. setting $\epsilon_1=1$ and $\epsilon_2<1$). Their analysis was improved by \cite{grewal2024improved}, who proved that Bell difference sampling gives rise to a tester provided that  $\epsilon_2 \leq C\epsilon_1^6$. However, they also required that $\epsilon_1$ was larger than some critical threshold. This latter requirement was relaxed by \cite{arunachalam2024tolerant}, allowing for arbitrary $\epsilon_1$ (this regime is called tolerant testing). However, their argument only obtained a weaker gap relation of the form $\epsilon_2 \leq \exp{-O\p{1/\epsilon_1^c}}$
, and furthermore it turned out their argument relied on an unproven statement in additive combinatorics, making their ultimate result conjectural (see Conjecture 1.2 in \cite{arunachalam2024tolerant}\footnote{More specifically the second version currently on the arXiv. The first version (lemma 4.17) contains contains a misstatement of a result from \cite{viola2011selected}, leading to the requirement of Conjecture 1.2 in the second version.}).\\

\noindent In this paper we both sidestep their need for a conjecture, and improve their gap relation to a \emph{polynomial} one. This is the polynomial gap mentioned in the title. Concretely we prove the following result:

\begin{restatable}{theorem}{theoremone}\label{theoremone}
Let $\epsilon_1,\epsilon_2 \in [0,1]$ and let $\ket{\psi}$ be a state with stabilizer fidelity 
\begin{equation}
\mathcal{F}_S(\ket{\psi}) := \max_{\ket{S}\in \mathrm{STAB_n}} \abs*{\braket{\psi}{S}}^2,
\end{equation}
either (1) larger than $\epsilon_1$ or (2) smaller than $\epsilon_2$. If $\epsilon_2\leq C'\epsilon_1^{672}$ for some constant $C'$, then $O\big(\epsilon_1^{-12}\big)$ rounds of Bell difference sampling can distinguish between case (1) and case (2) with probability greater than $2/3$.
\end{restatable}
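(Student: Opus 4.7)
The strategy is to run $O(\epsilon_1^{-12})$ rounds of Bell difference sampling, use the samples to identify a candidate Lagrangian subspace $L\subseteq\Fnn$, and then decide whether $\ket{\psi}$ is close to the stabilizer code associated with $L$. The starting observation is that the characteristic distribution $\Xi_\psi(x):=2^{-n}\abs{\bra{\psi}W_x\ket{\psi}}^2$ is, for a genuine stabilizer state, uniform on a Lagrangian, and that Bell difference sampling draws from its self-convolution $\Xi_\psi\ast\Xi_\psi$. Thus large stabilizer fidelity should translate into polynomial-in-$\epsilon_1$ concentration of $\Xi_\psi$ on some isotropic $L$, and small stabilizer fidelity should preclude any such concentration.

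The technical heart is the uncertainty relation advertised in the abstract. Given a set $\cS$ of Pauli operators with anticommutation graph $G_\cS$, I would prove the state-independent bound
\begin{equation}
\sum_{P\in\cS}\abs{\bra{\psi}P\ket{\psi}}^2 \leq \vartheta(G_\cS)
\end{equation}
for every pure state $\ket{\psi}$, where $\vartheta$ is the Lov{\'a}sz theta function (or a suitable variant). This specializes to the classical Wehner-Winter bound when $G_\cS$ is a clique (mutually anticommuting Paulis, where $\vartheta=1$), while remaining polynomial on structured subsets such as cosets of isotropic subspaces. The natural proof is to feed the optimal orthonormal representation realizing $\vartheta(G_\cS)$ into the Gram matrix of the vectors $\{P\ket{\psi}\}_{P\in\cS}$ and invoke positive semidefiniteness.

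With the uncertainty relation in hand, I would translate case (1) into the quantitative statement $\sum_{x\in L}\Xi_\psi(x)^2\gtrsim \poly(\epsilon_1)$ for some Lagrangian $L\subseteq\Fnn$: if no such $L$ existed, then the level set $\{x:\Xi_\psi(x)\geq\tau\}$ would contain a subset with too rich an anticommutation graph, contradicting the relation just proved. A standard Fourier-analytic spanning argument in the style of \cite{grewal2024improved,arunachalam2024tolerant} then shows that $\poly(1/\epsilon_1)$ Bell difference samples recover (a subspace containing) $L$ with constant probability, after which one additional round of direct-fidelity estimation on the stabilizer code associated with $L$ separates the two cases. The exponents $672$ and $12$ should emerge from tracking polynomial losses through this chain of reductions.

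The step I expect to be the main obstacle is establishing the uncertainty relation with a genuinely polynomial dependence on $\epsilon_1$. While the Lov{\'a}sz theta function is conceptually the right convex relaxation of the clique number, constructing an orthonormal representation for anticommutation graphs of Paulis that interacts cleanly with the symplectic form governing $\Fnn$ is delicate, and it is precisely this step that separates a polynomial gap from the $\exp(-O(1/\epsilon_1^c))$ gap of \cite{arunachalam2024tolerant}. A secondary worry is ensuring that none of the subsequent sampling or spanning arguments silently reintroduces an exponential blow-up; if any one step loses a factor super-polynomial in $\epsilon_1$, the entire improvement collapses, so the accounting has to be done carefully end-to-end.
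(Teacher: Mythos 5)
There is a genuine gap, and it sits exactly where the real work of the paper happens. Your plan asserts that if no Lagrangian carried $\mathrm{poly}(\epsilon_1)$ mass, the level set $\{x: 2^n p_\psi(x)\ge\tau\}$ would have ``too rich an anticommutation graph,'' contradicting the uncertainty relation. This does not follow: the uncertainty relation is only an \emph{upper} bound on the mass a set can carry in terms of $\vartheta$ of its anticommutation graph, so a rich anticommutation graph merely caps the mass of that set --- it produces no Lagrangian, and for an arbitrary level set you have no way to compute or usefully bound $\vartheta$ in the first place. The missing ingredient is the additive-combinatorics step: large $\E_{x\sim q_\psi}[2^np_\psi(x)]$ only gives a level set that is approximately closed under addition in a probabilistic sense; one must pass through Balog--Szemer\'edi--Gowers and the polynomial Freiman--Ruzsa theorem to extract an honest subspace $V\subseteq\F_2^{2n}$ carrying mass $\gtrsim\gamma^{55}|V|$ (Theorem~\ref{thm:tol_test_main}). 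Only then does the theta-function bound bite, because for a subspace one can write $V\cong\langle Z_1,X_1,\dots,Z_k,X_k,Z_{k+1},\dots,Z_{k+m}\rangle$ and prove $\vartheta(\Gamma_V)\le|V|/2^k$ (Lemma~\ref{lemma:ur-final}, via the symplectic graph and the product/sum properties of $\vartheta$), forcing $2^k\lesssim\gamma^{-55}$ and hence a covering of $V$ by few isotropic subspaces, one of which must be heavy; extending it to a Lagrangian gives the fidelity lower bound. You also aim the uncertainty relation at the wrong direction: completeness (case (1) implies a large Bell statistic) is the easy part, already given by $\mathcal{F}_S\le\gamma^{1/6}$ (Fact~\ref{fact:main-upperbound}); the relation is needed for soundness.

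A second concrete problem is your algorithm. The theorem promises $O(\epsilon_1^{-12})$ rounds independent of $n$, but ``recover (a subspace containing) $L$ and then do direct fidelity estimation'' requires identifying a subspace of $\F_2^{2n}$, which needs at least $\Omega(n)$ samples and an extra measurement primitive beyond Bell difference sampling. The paper's tester never learns anything: it estimates the single scalar $\gamma=\E_{x\sim q_\psi}[|\bra{\psi}W_x\ket{\psi}|^2]$ from the acceptance probability and thresholds it, with all structural content packed into the sandwich $C\gamma^{112}\le\mathcal{F}_S(\ket{\psi})\le\gamma^{1/6}$ (Theorem~\ref{fidelitysandwich}); the exponents $672$ and $12$ then fall out of $D_1=\epsilon_1^6$, $D_2=(\epsilon_2/C)^{1/112}$ and a Hoeffding bound. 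Your uncertainty relation itself is the correct statement (the paper proves it via Hastings--O'Donnell's $\Psi(\Gamma)\le\vartheta(\Gamma)$ rather than an orthonormal-representation/Gram argument, and a de Gois--style proof exists), but as structured your proposal would not yield the claimed tester without the BSG/PFR extraction step and a dimension-free decision rule.
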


\noindent The polynomial degree we achieve is likely highly suboptimal. We aim to improve it in future work. To achieve this result we build on the work of \cite{arunachalam2024tolerant}, combined with a generalized uncertainty relation involving the Lov{\'a}sz theta function, which we derive from a result on Hamiltonian optimization due to Hastings \& O'Donnell \cite{hastings2022optimizing}, which we think may be of independent interest. \\

\noindent \textbf{Remark:} After we showed the authors of \cite{arunachalam2024tolerant} a summary of our result, they derived their own version of our result based on an earlier note by Sergey Bravyi.
Their argument, which uses different techniques, but proceeds essentially along the same lines, is given in \cite{NOTE_SA}. A third, independent, proof of our result was also posted \cite{2024merhdad} shortly after we posted a first version of this paper on the arXiv. This argument proceeds along different lines from ours, achieving a slightly worse polynomial degree. 
\section{Preliminaries}
In this section we quote some background facts on graph theory (in particular on the Lov{\'a}sz theta) and Bell difference sampling. 

\subsection{Graph theoretic notions}
\noindent\textbf{Notation:} Given a (simple) graph $\Gamma=\p{V\p{\Gamma},E\p{\Gamma}}$ we denote its vertex 
  set by $V(\Gamma)$ and its edge set by $E(\Gamma)$. 
We use $K_n$ to denote the complete graph on $n$ vertices and
  $\overline{\Gamma}$ to represent the complement graph of 
  $\Gamma$, which shares the same vertex set of $\Gamma$,
  but has complementary edges, i.e. $E\p{\overline{\Gamma}}\cap E\p{\Gamma} =\varnothing$
  and $E\p{\overline{\Gamma}}\cup E\p{\Gamma} = 
  E\p{K_{\abs*{V(\Gamma)}}}$.
For graphs $\Gamma_1$ and $\Gamma_2$, say $\Gamma_1 \subseteq \Gamma_2$ if and only if $V\p{\Gamma_1} = V\p{\Gamma_2}$ and
  $E\p{\Gamma_1} \subseteq E\p{\Gamma_2}$.\\

\noindent\textbf{Disjoint union:} Graphs $\Gamma_1=\p{V_1,E_1}$ and $\Gamma_2=\p{V_2,E_2}$,
  have a disjoint union, denoted by $\Gamma_1 \sqcup \Gamma_2$.
  It is defined as the graph with $V\p{\Gamma_1\sqcup\Gamma_2} = V_1
  \cup V_2$, where we assume $V_1$ and $V_2$ are disjoint and $E\p{\Gamma_1\sqcup\Gamma_2} = E\p{\Gamma_1}\cup E\p{\Gamma_2}$.\\
  
\noindent\textbf{Strong product:} Graphs $\Gamma_1=\p{V_1,E_1}$ and $\Gamma_2=\p{V_2,E_2}$,
  have a \emph{strong product}, denoted by $\Gamma_1 \boxtimes \Gamma_2$.
  It is defined as a graph with vertex set $V\p{\Gamma_1 \boxtimes \Gamma_2} = V_1
  \times V_2$. Distinct vertices 
  $\p{u_1,v_1},\p{u_2,v_2}\in V_1\times V_2$ are connected by an edge
  if and only if:
\begin{enumerate} 
    \item $u_1=u_2$ and $v_1$ is connected with $v_2$ in $\Gamma_2$, or
    \item $v_1=v_2$ and $u_1$ is connected with $u_2$ in $\Gamma_1$, or
    \item $u_1$ is connected with $u_2$ in $\Gamma_1$ and 
      $v_1$ is connected with $v_2$ in $\Gamma_2$.\\
\end{enumerate}

\noindent\textbf{Lov{\'a}sz theta function:} Lov{\'a}sz \cite{lovasz1979shannon} introduced the following graph parameter in the context of classical Shannon theory:
\begin{definition}[Lov{\'a}sz theta function]
  For any graph $\Gamma=\p{[n],E}$, the Lov{\'a}sz theta is
    the number
  \begin{equation}
    \vartheta\p{\Gamma} = \max_{\rho \in \R^{n\times n}}
      \lrs{
        \Tr\p{\rho \mathbb{J}}
        \ \Big\vert \ 
        \rho \ge 0,
        \Tr \p{\rho} = 1,
        \rho_{jk}=0\  \forall {j,k}\in E
      },
  \end{equation}
  where $\mathbb{J}$ denotes the all-$1$'s matrix.
\end{definition}

We will not need anything particularly sophisticated regarding the Lov{\'a}sz theta function in our proofs, merely the following facts:

\begin{fact}[Theorem 7 of \cite{lovasz1979shannon}]
  \label{fact:theta-product}
  For graphs $\Gamma_1$ and $\Gamma_2$, 
    $\vartheta\p{\Gamma_1\boxtimes \Gamma_2} = 
      \vartheta\p{\Gamma_1} \cdot \vartheta\p{\Gamma_2}$.
\end{fact}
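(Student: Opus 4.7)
I would prove the two inequalities separately, both by tensor-product constructions, with the lower bound coming directly from the SDP in the definition and the upper bound from Lov{\'a}sz's dual handle/orthonormal-representation formulation of $\vartheta$.

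For $\vartheta(\Gamma_1\boxtimes \Gamma_2)\ge \vartheta(\Gamma_1)\vartheta(\Gamma_2)$, I would take optimizers $\rho_1,\rho_2$ in the SDP definition and set $\rho := \rho_1\otimes \rho_2$, viewed as a matrix indexed by $V(\Gamma_1)\times V(\Gamma_2)$. It is positive semidefinite, has $\Tr\rho = \Tr\rho_1\cdot \Tr\rho_2 = 1$, and satisfies the edge constraint: using the strong-product definition, for every edge $\{(u_1,v_1),(u_2,v_2)\}$ of $\Gamma_1\boxtimes \Gamma_2$ either $u_1$ is adjacent to $u_2$ in $\Gamma_1$ (with $v_1=v_2$ or $v_1$ adjacent to $v_2$ in $\Gamma_2$) or $v_1$ is adjacent to $v_2$ in $\Gamma_2$ with $u_1=u_2$, so at least one of $(\rho_1)_{u_1 u_2}$, $(\rho_2)_{v_1 v_2}$ vanishes and hence so does their product. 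Finally $\mathbb{J}_{V_1\times V_2}=\mathbb{J}_{V_1}\otimes \mathbb{J}_{V_2}$, so the objective factorizes:
\begin{equation*}
\Tr(\rho\,\mathbb{J}_{V_1\times V_2}) = \Tr(\rho_1\mathbb{J}_{V_1})\cdot \Tr(\rho_2\mathbb{J}_{V_2}) = \vartheta(\Gamma_1)\vartheta(\Gamma_2).
\end{equation*}

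For the reverse inequality, I would invoke Lov{\'a}sz's dual handle characterization
\begin{equation*}
\vartheta(\Gamma) = \min_{c,\{u_i\}} \max_{i\in V(\Gamma)} \frac{1}{(c^T u_i)^2},
\end{equation*}
where $c$ ranges over unit vectors and $\{u_i\}$ over orthonormal representations of $\Gamma$, i.e., collections of unit vectors with $u_i^T u_j = 0$ for every pair of distinct non-adjacent vertices $i,j$. Taking near-optimal $(c_1,\{u_i\})$ for $\Gamma_1$ and $(c_2,\{w_j\})$ for $\Gamma_2$, I would propose the candidate representation $\{u_i\otimes w_j\}$ with handle $c := c_1\otimes c_2$. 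For any two distinct non-adjacent vertices $(i,j),(k,l)$ of $\Gamma_1\boxtimes \Gamma_2$ one easily checks from the three cases of the product definition that either $i\neq k$ with $i,k$ non-adjacent in $\Gamma_1$ (whence $u_i^T u_k = 0$) or $j\neq l$ with $j,l$ non-adjacent in $\Gamma_2$ (whence $w_j^T w_l = 0$), which forces $(u_i\otimes w_j)^T(u_k\otimes w_l) = 0$ and validates the tensor representation. The handle values factorize, and
\begin{equation*}
\max_{(i,j)} \frac{1}{(c^T(u_i\otimes w_j))^2} = \max_i\frac{1}{(c_1^T u_i)^2}\cdot \max_j\frac{1}{(c_2^T w_j)^2} = \vartheta(\Gamma_1)\vartheta(\Gamma_2),
\end{equation*}
yielding $\vartheta(\Gamma_1\boxtimes \Gamma_2)\le \vartheta(\Gamma_1)\vartheta(\Gamma_2)$.

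The only non-elementary input is the handle/orthonormal-representation form of $\vartheta$, which is itself a theorem of Lov{\'a}sz obtained from SDP duality and is the main obstacle to a fully self-contained proof; once that is in hand the rest is pure bookkeeping, reflecting the fact that the strong product is \emph{designed} so that edges of $\Gamma_1\boxtimes \Gamma_2$ zero out a product entry $(\rho_1)_{\bullet\bullet}(\rho_2)_{\bullet\bullet}$ (used for the primal lower bound) while non-edges zero out the corresponding inner-product factor $(u_\bullet^T u_\bullet)(w_\bullet^T w_\bullet)$ (used for the dual upper bound). This primal/dual symmetry built into $\boxtimes$ is what makes both directions of the tensor construction succeed.
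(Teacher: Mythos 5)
The paper does not prove \Cref{fact:theta-product}; it quotes it directly as Theorem 7 of Lov\'asz's 1979 paper. Your argument is correct and is essentially Lov\'asz's own: tensor the primal SDP witnesses for $\vartheta(\Gamma_1\boxtimes\Gamma_2)\ge\vartheta(\Gamma_1)\vartheta(\Gamma_2)$, and tensor orthonormal representations with a product handle for the reverse inequality, with the only non-elementary input being the equivalence between the SDP and handle characterizations of $\vartheta$, which you correctly identify as the imported theorem. Your case analyses are sound on both sides — for the primal, every edge of $\Gamma_1\boxtimes\Gamma_2$ has at least one factor-pair that is an edge (so a zero entry in $\rho_1$ or $\rho_2$), and for the dual, every distinct non-adjacent pair of $\Gamma_1\boxtimes\Gamma_2$ has at least one factor-pair that is a distinct non-edge (so an orthogonality in $\{u_i\}$ or $\{w_j\}$).
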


\begin{fact}[Section 18 of \cite{knuth1993sandwich}]
  \label{fact:theta-sum}
  For graphs $\Gamma_1$ and $\Gamma_2$, 
    $\vartheta\p{\Gamma_1 \sqcup \Gamma_2} = 
      \vartheta\p{\Gamma_1} + \vartheta\p{\Gamma_2}$.
\end{fact}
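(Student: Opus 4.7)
The plan is to prove the two inequalities $\le$ and $\ge$ separately, using the block structure of the SDP induced by the disjoint union.

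For the upper bound, I would write an optimal feasible $\rho$ for $\Gamma_1\sqcup\Gamma_2$ in block form $\rho = \begin{pmatrix} A & B \\ B^T & C\end{pmatrix}$ indexed by $V_1$ and $V_2$. Since $E(\Gamma_1\sqcup\Gamma_2)=E(\Gamma_1)\cup E(\Gamma_2)$, the vanishing constraints only affect the diagonal blocks, which are PSD as principal submatrices. Letting $a=\mathbf{1}^T A\mathbf{1}$, $b=\mathbf{1}^T B\mathbf{1}$, $c=\mathbf{1}^T C\mathbf{1}$, the objective is $a+2b+c$. Evaluating $\rho\succeq 0$ on the vectors $(\mathbf{1},0)$ and $(0,\mathbf{1})$ yields the Gram inequality $b^2\le ac$, so the objective is at most $(\sqrt a+\sqrt c)^2$. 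Writing $t=\mathrm{Tr}(A)\in[0,1]$, the rescalings $A/t$ and $C/(1-t)$ are feasible for $\vartheta(\Gamma_1)$ and $\vartheta(\Gamma_2)$ (when well-defined), giving $a\le t\vartheta(\Gamma_1)$ and $c\le(1-t)\vartheta(\Gamma_2)$. A single application of Cauchy--Schwarz then gives $\bigl(\sqrt{t\vartheta(\Gamma_1)}+\sqrt{(1-t)\vartheta(\Gamma_2)}\bigr)^2\le\vartheta(\Gamma_1)+\vartheta(\Gamma_2)$, closing the upper bound.

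For the lower bound I would construct a feasible $\rho$ achieving the sum explicitly. Take optimizers $\rho_i$ with Gram factorizations $\rho_i=V_iV_i^T$; the vectors $u_i:=V_i^T\mathbf{1}_{n_i}$ satisfy $\|u_i\|^2=\mathbf{1}^T\rho_i\mathbf{1}=\vartheta(\Gamma_i)$. Padding columns with zeros embeds both $V_i$ into a common Euclidean space, after which orthogonal column-side transformations (which preserve $V_iV_i^T$ and hence the vanishing constraints) can be applied to align both $u_i$ along a chosen common unit vector $\hat u$. Now let $V$ be the vertical stack of $\sqrt{\alpha}\,\hat V_1$ and $\sqrt{1-\alpha}\,\hat V_2$, and set $\rho:=VV^T$; its diagonal blocks are $\alpha\rho_1$ and $(1-\alpha)\rho_2$ (still feasible) and its off-diagonal blocks are unconstrained in the disjoint union. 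The trace is $1$, and $V^T\mathbf{1}=\bigl(\sqrt{\alpha\vartheta(\Gamma_1)}+\sqrt{(1-\alpha)\vartheta(\Gamma_2)}\bigr)\hat u$, so $\mathrm{Tr}(\rho\mathbb{J})=\bigl(\sqrt{\alpha\vartheta(\Gamma_1)}+\sqrt{(1-\alpha)\vartheta(\Gamma_2)}\bigr)^2$. Choosing $\alpha=\vartheta(\Gamma_1)/(\vartheta(\Gamma_1)+\vartheta(\Gamma_2))$ yields exactly $\vartheta(\Gamma_1)+\vartheta(\Gamma_2)$.

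The only step with any subtlety is the simultaneous rotation of the two Gram factorizations so that the images of the all-ones vectors become parallel; once the column counts are matched by zero-padding, this is routine linear algebra, and I do not expect a substantive obstacle.
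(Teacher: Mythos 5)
The paper does not prove this fact at all — it is cited to Knuth's \emph{The sandwich theorem} (Section~18) as a black box — so there is no in-paper proof to compare against. Your argument is a correct, self-contained proof working directly with the primal SDP formulation used in the paper's Definition. Both directions check out: for $\vartheta(\Gamma_1\sqcup\Gamma_2)\le\vartheta(\Gamma_1)+\vartheta(\Gamma_2)$, the block decomposition, the Gram inequality $b^2\le ac$ obtained by restricting the quadratic form to $\mathrm{span}\{(\mathbf{1},0),(0,\mathbf{1})\}$, the feasibility of the rescaled diagonal blocks $A/t$ and $C/(1-t)$, and the final Cauchy--Schwarz step are all sound (and the degenerate case $t\in\{0,1\}$ is harmless since a PSD matrix with zero trace is zero). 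For the lower bound, the Gram-vector construction is correct: the column-side orthogonal rotations preserve each $\rho_i=V_iV_i^T$ and hence the edge constraints, the off-diagonal block of $VV^T$ is unconstrained in the disjoint union, and the choice $\alpha=\vartheta(\Gamma_1)/(\vartheta(\Gamma_1)+\vartheta(\Gamma_2))$ saturates the Cauchy--Schwarz step in the upper bound. Knuth's own Section~18 argument is organized around a somewhat different characterization of $\vartheta$, but the content is equivalent and nothing is gained or lost for the paper's purposes; if anything, your proof is better suited to this paper since it stays entirely within the exact SDP definition the paper states.
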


\begin{fact}[Monotonicity of Lov{\'a}sz theta function]
  \label{fact:theta-mono}
  For graphs $\Gamma_1$ and $\Gamma_2$ such that 
    $\Gamma_1\subseteq \Gamma_2$, we have that
    $\vartheta\p{\Gamma_1} \ge \vartheta\p{\Gamma_2}$.
\end{fact}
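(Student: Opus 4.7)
The plan is to prove monotonicity directly from the variational definition of $\vartheta$, by showing that enlarging the edge set of the graph only shrinks the feasible set of the maximization problem. Concretely, I would unpack the three feasibility conditions on $\rho \in \R^{n\times n}$ in the definition: (i) $\rho \succeq 0$, (ii) $\Tr(\rho) = 1$, and (iii) $\rho_{jk} = 0$ for every $\{j,k\} \in E(\Gamma)$. Conditions (i) and (ii) do not mention the graph at all; only condition (iii) does, and it becomes strictly more restrictive as more edges are added.

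The key step is then to observe that since $\Gamma_1 \subseteq \Gamma_2$ means $V(\Gamma_1) = V(\Gamma_2)$ and $E(\Gamma_1) \subseteq E(\Gamma_2)$, any matrix $\rho$ satisfying $\rho_{jk}=0$ for all $\{j,k\} \in E(\Gamma_2)$ automatically satisfies $\rho_{jk}=0$ for all $\{j,k\} \in E(\Gamma_1)$. Combined with the graph-independent constraints (i) and (ii), this shows that the feasible set for the $\vartheta(\Gamma_2)$ program is contained in the feasible set for the $\vartheta(\Gamma_1)$ program.

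Finally, since both programs maximize the same objective $\Tr(\rho \mathbb{J})$, and the feasible region for $\Gamma_1$ contains that for $\Gamma_2$, the maximum over the larger set can only be at least as big as the maximum over the smaller one. Hence $\vartheta(\Gamma_1) \ge \vartheta(\Gamma_2)$. There is really no obstacle here: this is a one-line ``monotonicity by inclusion of feasible sets'' argument, and the only thing to be careful about is the sign convention baked into the definition — namely, that the edge-indexed entries of $\rho$ are constrained to vanish (so more edges means fewer feasible $\rho$), which explains why $\vartheta$ is order-reversing under $\subseteq$ with this convention.
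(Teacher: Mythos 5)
Your argument is correct and is exactly the approach the paper takes: the paper notes in one line that the fact ``follows directly from the definition, as more edges will result in more constraints in the maximization,'' which is precisely your feasible-set-inclusion observation. Nothing is missing.
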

The last fact follows directly from the definition, as more edges will result in more constraints in the maximization.

\subsection{Weyl operators and distributions}
We recall some standard concepts related to stabilizer states and stabilizer testing. We will consistently refer to Hermitian (unsigned) $n$-qubit Pauli operators as Weyl operators:
\begin{definition}[Weyl operator]
For $x=(x_1,x_2)\in \F_2^n\times \F_2^n =\F_2^{2n} $, the Weyl operator $W_x$ is defined as
\begin{equation}
W_x=i^{x_1\cdot x_2}\bigotimes_{i=1}^nX^{x_{1,i}}Z^{x_{2,i}}.
\end{equation}
\end{definition}
We will occasionally intentionally identify (i.e. confuse) binary vector spaces and sets of Weyl operators. \\

If we want to know whether two Weyl operators commute, we calculate their symplectic inner product (which is a genuine symplectic form on $\mathbb{F}_2^{2n}$):
\begin{definition}The \textit{symplectic inner product} between two vectors $x,y\in \F_2^{2n}$ is the bilinear form
\begin{equation}
  [x,y]=\langle x_1,y_2\rangle+\langle x_2,y_1\rangle,
\end{equation}
where $x=(x_1,x_2), y=(y_1,y_2)$ and $x_1,x_2,y_1,y_2\in \F_2^n$
\end{definition}
The following fact is easily verified:
\begin{proposition} For $x,y\in \F_2^{2n}$,
  \begin{equation}
    W_xW_y=(-1)^{[x,y]}W_yW_x.
  \end{equation}
\end{proposition}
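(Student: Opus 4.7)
The plan is to unpack the definition of $W_x$ and $W_y$ and reduce everything to the single-qubit commutation $ZX = -XZ$. The overall scalar prefactors $i^{x_1 \cdot x_2}$ and $i^{y_1 \cdot y_2}$ in $W_x W_y$ and $W_y W_x$ are identical and, being scalars, commute with everything, so they cancel out of the comparison and can be ignored. Consequently it suffices to prove the claim for the unsigned tensor products $\bigotimes_i X^{x_{1,i}} Z^{x_{2,i}}$ and $\bigotimes_i X^{y_{1,i}} Z^{y_{2,i}}$.

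Because these are tensor products over the $n$ qubits, the full commutation sign is a product of the signs picked up on each tensor factor. So I would first establish the single-qubit identity: for $a,b,c,d \in \F_2$,
\begin{equation}
(X^a Z^b)(X^c Z^d) = (-1)^{bc+ad} (X^c Z^d)(X^a Z^b).
\end{equation}
This follows from applying $ZX = -XZ$ exactly $bc$ times to move the $Z^b$ past $X^c$ in the left-hand product, and $ad$ times to move $Z^d$ past $X^a$ in the right-hand product, and observing that the resulting bulk $X^{a+c} Z^{b+d}$ is the same on both sides (since $\F_2$-addition matches $\{0,1\}$-integer addition modulo parities of signs).

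Tensoring this single-qubit identity over all $n$ qubits gives an overall sign of
\begin{equation}
(-1)^{\sum_{i=1}^n (x_{2,i} y_{1,i} + x_{1,i} y_{2,i})} = (-1)^{\langle x_2, y_1 \rangle + \langle x_1, y_2 \rangle} = (-1)^{[x,y]},
\end{equation}
which is exactly the claim after reinstating the (commuting) scalar phases $i^{x_1\cdot x_2}$ and $i^{y_1 \cdot y_2}$. There is no genuine obstacle here; the only mild care needed is to keep track of the fact that the $X^{a+c}$ and $Z^{b+d}$ produced by the two orderings truly coincide as operators (not merely modulo 2 in the exponent), which is automatic because both $X^2 = Z^2 = I$. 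Hence the proposition follows.
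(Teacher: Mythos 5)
Your verification is correct: the scalar phases indeed cancel, the single-qubit sign count $(-1)^{bc+ad}$ is right, and tensoring gives exactly $(-1)^{[x,y]}$ with the paper's convention $[x,y]=\langle x_1,y_2\rangle+\langle x_2,y_1\rangle$. The paper offers no proof (it calls the proposition ``easily verified''), and your argument is precisely the standard qubit-by-qubit computation that claim implicitly relies on, so there is nothing to add.
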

\begin{definition}
An isotropic subspace of $\F_2^{2n}$ is a subspace $V$ of $\F_2^{2n}$ such that for all $x,y\in V$, $[x,y]=0$. If in addition $|V|=2^n$, we call the subspace \emph{Lagrangian}. 
\end{definition}
Lagrangian subspaces (which can be directly identified with the stabilizer groups of stabilizer states) are of importance to us, because they allow us to lower bound the stabilizer fidelity:
\begin{restatable}[Proof of Theorem 3.3 of \cite{gross2021schur},
  Corollary 7.4 of \cite{grewal2024improved}]
  {proposition}{propmainlowerbound}
  \label{prop:main-lowerbound}
  Let $V$ be a Lagrangian subspace of $\F_2^{2n}$. Then, 
    \begin{equation}
      \mathcal{F}_S\p{\ket{\psi}}\geq \sum_{x\in V}p_{\psi}(x).
    \end{equation}
    where $p_{\psi}(x) =2^{-n}|\langle \psi|W_x|\psi\rangle|^2$ is the characteristic distribution of $\ket{\psi}$ (see below).
\end{restatable}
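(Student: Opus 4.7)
The plan is to exploit the fact that a Lagrangian subspace $V\subseteq\F_2^{2n}$ determines a canonical orthonormal basis of stabilizer states $\{\ket{S_s}\}_{s=1}^{2^n}$, namely the common eigenbasis of the commuting Hermitian operators $\{W_x:x\in V\}$, and then rewrite $\sum_{x\in V}p_\psi(x)$ as a quadratic expression in the overlaps $a_s:=|\braket{\psi}{S_s}|^2$. First, since $V$ is isotropic, all $W_x$ with $x\in V$ pairwise commute, and a short check shows $W_x^2=I$, so each $W_x$ is Hermitian with eigenvalues $\pm 1$. Because $\dim V=n$, the algebra they generate is a maximal abelian subalgebra of the $2^n$-dimensional matrix algebra, so its common eigenspaces are one-dimensional. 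By choosing signs $\epsilon(x)\in\{\pm1\}$ that lift $V$ to an honest stabilizer group, these eigenstates are genuine stabilizer states, giving the desired basis $\{\ket{S_s}\}$.

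\textbf{Key computation.} Write $W_x\ket{S_s}=\lambda_s(x)\ket{S_s}$ with $\lambda_s(x)\in\{\pm1\}$. Inserting this spectral decomposition into $\langle\psi|W_x|\psi\rangle=\sum_s \lambda_s(x) a_s$ and squaring gives
\begin{equation}
\sum_{x\in V}|\langle\psi|W_x|\psi\rangle|^2=\sum_{s,s'}a_sa_{s'}\sum_{x\in V}\lambda_s(x)\lambda_{s'}(x).
\end{equation}
The inner sum evaluates to $2^n\delta_{s,s'}$: although each individual $\lambda_s$ is only a \emph{projective} character of $V$ (because $W_xW_y=f(x,y)W_{x+y}$ for some cocycle $f:V\times V\to\{\pm1\}$), the product $\lambda_s\lambda_{s'}$ has the cocycle cancel and is therefore a genuine character of $V\cong\F_2^n$, which is trivial exactly when $s=s'$ (distinct eigenstates of a maximal abelian stabilizer group have distinct eigenvalue patterns). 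Standard orthogonality of characters of $\F_2^n$ then yields the claim. Dividing by $2^n$ gives
\begin{equation}
\sum_{x\in V}p_\psi(x)=\sum_s a_s^2.
\end{equation}

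\textbf{Conclusion.} Since $\{\ket{S_s}\}$ is an orthonormal basis, $\sum_s a_s=1$, so
\begin{equation}
\sum_{s}a_s^2\leq \Bigl(\max_s a_s\Bigr)\sum_s a_s=\max_s a_s\leq \mathcal{F}_S(\ket{\psi}),
\end{equation}
where the last inequality uses that each $\ket{S_s}$ is a stabilizer state and the stabilizer fidelity takes a maximum over all stabilizer states. Combining the last two displays proves the proposition.

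\textbf{Main obstacle.} There is no serious obstacle, but the one point that deserves care is the projective-character issue: one must note that $\lambda_s$ need not be a homomorphism $V\to\{\pm1\}$, and verify that the cocycle nevertheless drops out of the product $\lambda_s\lambda_{s'}$ so that character orthogonality still applies. Everything else is routine stabilizer formalism.
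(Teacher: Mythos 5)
Your proof is correct and follows essentially the same route as the paper's: both pass to the common eigenbasis of $\{W_x : x\in V\}$ (a basis of stabilizer states), use character orthogonality over $V$ to show $\sum_{x\in V}p_\psi(x)=\sum_s \langle S_s|\psi|S_s\rangle^2$, and then bound this by $\max_s \langle S_s|\psi|S_s\rangle \le \mathcal{F}_S(\ket{\psi})$. The only cosmetic difference is that the paper organizes the orthogonality step as a twirl (computing the purity of $2^{-n}\sum_{x\in V}W_x\psi W_x^\dagger$ in two ways) and simply asserts linear functions $f_x$, whereas you expand $\langle\psi|W_x|\psi\rangle$ directly and explicitly verify that the sign cocycle cancels in the product $\lambda_s\lambda_{s'}$ — a detail the paper glosses over.
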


For completeness, we give a proof of this proposition in \Cref{appendix:prop:main-lowerbound}.

\subsection{Bell difference sampling}
In this section we briefly discuss Bell difference sampling. The key objects are the \emph{characteristic} and \emph{Weyl} distributions of a state $\ket{\psi}$:
\begin{definition}[Characteristic and Weyl distributions]
We define the characteristic distribution of $\ket{\psi}$ as 
\begin{equation}
p_{\psi}(x)=2^{-n}|\langle \psi|W_x|\psi\rangle|^2.
\end{equation}
Furthermore, we define the Weyl distribution $q_\psi$ as
\begin{equation}
q_{\psi}(x)= 4^n(p_\psi*p_\psi)(x)=\sum_{y\in \F_2^{2n}} p_\psi(y)p_\psi(x+y).
\end{equation}
\end{definition}
We gather some properties of these distributions in the following proposition. Two of these are straightforward, and the third is a core result of \cite{arunachalam2024tolerant} (lemma 3.3):
\begin{proposition}[Properties of $p_\psi$]
    \label{prop:property-ppsi}
The characteristic distribution $p_{\psi}$ has the following properties:
    \begin{itemize}
        \item $\sum_{x\in \F_2^{2n}}p_{\psi}(x)=1$,
        \item For all $x\in \F_2^{2n}$, $p_{\psi}(x)\leq 2^{-n}$,
        \item $\E_{x\sim p_\psi}[2^np_{\psi}(x)]\geq \E_{x\sim q_\psi}[2^np_{\psi}(x)]\geq \p{\E_{x\sim p_\psi}[2^np_{\psi}(x)]}^2$.
    \end{itemize}
\end{proposition}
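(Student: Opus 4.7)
The plan is to derive all three bullets from a single structural fact: the symplectic Fourier transform $\hat p_\psi(s) := \sum_x (-1)^{[s,x]} p_\psi(x)$ of the characteristic distribution is nonnegative and in fact equals $2^n p_\psi(s)$. Given this, the whole proposition reduces to elementary manipulations.

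Bullets (1) and (2) are essentially immediate. Expanding $\rho=\ket{\psi}\bra{\psi}$ in the Weyl basis via $\rho = 2^{-n}\sum_x \Tr(\rho W_x) W_x$ and using the Hilbert--Schmidt orthogonality $\Tr(W_x W_y) = 2^n \delta_{xy}$ gives $\Tr(\rho^2) = 2^{-n}\sum_x \langle\psi|W_x|\psi\rangle^2$, which equals $1$ since $\ket{\psi}$ is pure; this is (1). Bullet (2) follows from $|\langle\psi|W_x|\psi\rangle|\leq 1$, by Cauchy--Schwarz together with the unitarity of $W_x$.

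For the main identity, I would use that $W_s^2=I$ together with $W_sW_x = (-1)^{[s,x]}W_xW_s$ to write $(-1)^{[s,x]}W_x = W_sW_xW_s$. Substituting into the definition of $\hat p_\psi$ and distributing one copy of $W_s$ to each tensor factor, I would apply the Pauli SWAP identity $\sum_x W_x\otimes W_x = 2^n F$ and the swap-trick $\Tr((A\otimes B)F) = \Tr(AB)$. This collapses $\hat p_\psi(s)$ to $\Tr(\rho W_s\rho W_s) = \langle\psi|W_s|\psi\rangle^2$, which is $2^n p_\psi(s)$ and is manifestly nonnegative because $W_s$ is Hermitian (so the expectation is real).

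Equipped with this identity, bullet (3) is a short calculation. By the convolution theorem $\hat q_\psi = \hat p_\psi^{\,2}$, and Plancherel then yields $\E_{x\sim p_\psi}[2^n p_\psi(x)] = 2^n \sum_x p_\psi(x)^2$ and $\E_{x\sim q_\psi}[2^n p_\psi(x)] = 4^n \sum_x p_\psi(x)^3$. The first inequality in (3) thus reads $\sum_x p_\psi(x)^2 \geq \sum_x (2^n p_\psi(x)) p_\psi(x)^2$, which holds termwise by bullet (2); the second reads $\sum_x p_\psi(x)^3 \geq (\sum_x p_\psi(x)^2)^2$, which is Cauchy--Schwarz using $\sum_x p_\psi(x) = 1$ from bullet (1). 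The only nontrivial step is the Fourier positivity identity $\hat p_\psi = 2^n p_\psi$; once that is in hand, everything else is bookkeeping and elementary inequalities on probabilities.
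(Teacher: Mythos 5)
Your proof is correct: the self-duality identity $\sum_x(-1)^{[s,x]}p_\psi(x)=\Tr\p{\psi W_s\psi W_s}=2^np_\psi(s)$ holds as you sketch it, and the three bullets then follow from the convolution theorem, the pointwise bound $2^np_\psi(x)\le 1$, and Cauchy--Schwarz with $\sum_x p_\psi(x)=1$, exactly as you say. The paper itself offers no proof --- it labels the first two bullets straightforward and cites the third to Lemma 3.3 of \cite{arunachalam2024tolerant} --- and your route via the symplectic Fourier self-duality of $p_\psi$ (a fact going back to \cite{gross2021schur}) is essentially the standard argument used in that reference, so there is nothing genuinely different to compare.
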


Finally we briefly discuss Bell difference sampling. Bell difference sampling is a surprisingly straightforward procedure that allows us to estimate $\E_{x\sim q_\psi}[|\langle\psi|W_x|\psi\rangle|^2]$ in the following way \cite{gross2021schur}:

\begin{enumerate}
    \item Measure $|\psi\rangle^{\otimes 2}$ in the Bell basis twice. This gives us two independent samples $x,y$ from $p_\psi$. We set $a=x+y$.
    \item Measure the Weyl operator $W_a$ twice. We accept the sample $a$ if the measurements agree, and reject otherwise.
\end{enumerate}
Repeating this many times and calculating the average number of accepts we can compute the expectation value $\E_{x\sim q_\psi}[|\langle\psi|W_x|\psi\rangle|^2]$. 
In particular (as can be seen in the proof of Theorem 3.3 in \cite{gross2021schur}), the probability of acceptance is equal to
\begin{equation}\label{paccept}
  p_{\text{accept}}=\frac{1}{2}\sum_a q_{\psi}(a)\p{1+|\langle \psi|W_a|\psi\rangle|^2}=\frac{1}{2}+\frac{1}{2}\E_{x\sim q_\psi}\lrb{\abs*{\bra{\psi}W_x\ket{\psi}}^2}.
\end{equation}
This expectation value is the core quantity in stabilizer testing. It provides an upper bound for the stabilizer fidelity as follows:
\begin{fact}[Lemma 3.1 of \cite{grewal2022low}]
  \label{fact:main-upperbound}
  Let $\ket{\psi}$ be an $n$-qubit pure quantum state. Then,
  \begin{equation}
    \mathcal{F}_S\p{\ket{\psi}} \le 
      \p{\E_{x\sim q_\psi}
        \lrb{\abs*{\bra{\psi}W_x\ket{\psi}}^2}}^{1/6}.
  \end{equation}
\end{fact}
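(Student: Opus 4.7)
The plan is to reinterpret the quantity $\E_{x \sim q_\psi}[\abs*{\bra{\psi}W_x\ket{\psi}}^2]$ as a squared Hilbert--Schmidt norm and then lower bound it by Cauchy--Schwarz against the optimal stabilizer projector. Let $\rho := \ketbra{\psi}{\psi}$ and introduce the Hilbert--Schmidt self-adjoint map
\begin{equation*}
  \cT(\sigma) := \sum_{x \in \F_2^{2n}} p_\psi(x)\, W_x\, \sigma\, W_x.
\end{equation*}
Unfolding the convolution that defines $q_\psi$, and using $\abs*{\bra{\psi}W_{x+y}\ket{\psi}}^2 = \abs*{\bra{\psi}W_xW_y\ket{\psi}}^2$ (since $W_xW_y$ and $W_{x+y}$ agree up to a phase) combined with the rank-one identity $\Tr(\rho A \rho B) = \Tr(\rho A)\Tr(\rho B)$, one obtains
\begin{equation*}
  \E_{x \sim q_\psi}\bigl[\abs*{\bra{\psi}W_x\ket{\psi}}^2\bigr] \;=\; \sum_{x,y} p_\psi(x) p_\psi(y)\, \Tr(\rho\, W_x W_y\, \rho\, W_y W_x) \;=\; \Tr\bigl(\rho\, \cT^2(\rho)\bigr) \;=\; \norm{\cT(\rho)}_F^2.
\end{equation*}

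Next, let $\ket{S}$ be a stabilizer state attaining $\mathcal{F}_S(\ket{\psi})$, set $\Pi := \ketbra{S}{S}$, and let $V^*\subseteq\F_2^{2n}$ be its stabilizer Lagrangian. Since $\norm{\Pi}_F^2 = \Tr(\Pi) = 1$, Cauchy--Schwarz in the Hilbert--Schmidt inner product yields $\norm{\cT(\rho)}_F^2 \geq \bigl(\Tr(\cT(\rho)\Pi)\bigr)^2$. Computing the trace directly,
\begin{equation*}
  \Tr(\cT(\rho)\Pi) \;=\; \sum_{x \in \F_2^{2n}} p_\psi(x)\,\abs*{\bra{\psi} W_x \ket{S}}^2 \;\geq\; \mathcal{F}_S(\ket{\psi})\!\sum_{x \in V^*}\! p_\psi(x),
\end{equation*}
where the inequality comes from restricting the sum to $V^*$ (on which $W_x\ket{S} = \pm\ket{S}$, so each term equals $\mathcal{F}_S(\ket{\psi})$) and dropping the remaining nonnegative contributions.

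The last ingredient is the standard bound $\mathcal{F}_S(\ket{\psi})^2 \leq \sum_{x \in V^*} p_\psi(x)$: this follows by expanding $\Pi = 2^{-n}\sum_{x \in V^*} \epsilon(x)W_x$ for the appropriate signs $\epsilon(x) \in \{\pm 1\}$ and applying Cauchy--Schwarz to the $|V^*| = 2^n$ Pauli coefficients of $\bra{\psi}\Pi\ket{\psi}$. Concatenating the three estimates,
\begin{equation*}
  \E_{x \sim q_\psi}\bigl[\abs*{\bra{\psi}W_x\ket{\psi}}^2\bigr] \;\geq\; \Bigl(\mathcal{F}_S(\ket{\psi})\!\sum_{x\in V^*}\!p_\psi(x)\Bigr)^{\!2} \;\geq\; \bigl(\mathcal{F}_S(\ket{\psi})\cdot \mathcal{F}_S(\ket{\psi})^2\bigr)^{\!2} \;=\; \mathcal{F}_S(\ket{\psi})^6,
\end{equation*}
and taking a sixth root delivers the claim. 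The one slightly subtle step is the very first one --- rewriting the Bell-difference-sampling expectation as $\norm{\cT(\rho)}_F^2$; once this is in hand, the stabilizer projector $\Pi$ is the natural Cauchy--Schwarz witness and does all the remaining work, so no additive-combinatorics input is required.
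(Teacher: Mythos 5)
The paper states this as Fact~\ref{fact:main-upperbound} and cites it to \cite{grewal2022low} without reproducing a proof, so there is no internal argument to compare against; I will just assess your proposal on its own. Your proof is correct. The three pieces all check: (i) the identity $\E_{x\sim q_\psi}[|\bra{\psi}W_x\ket{\psi}|^2]=\|\cT(\rho)\|_F^2$ follows from unfolding the convolution, the phase equality $|\bra{\psi}W_{y+z}\ket{\psi}|=|\bra{\psi}W_yW_z\ket{\psi}|$, the rank-one identity, and the self-adjointness of $\cT$ on the Hilbert--Schmidt space (so $\Tr(\rho\,\cT^2(\rho))=\langle\cT(\rho),\cT(\rho)\rangle_F$); (ii) Cauchy--Schwarz against $\Pi=\ketbra{S}{S}$ with $\|\Pi\|_F=1$ and the sign-stabilization $W_x\ket{S}=\pm\ket{S}$ on $V^*$ gives $\|\cT(\rho)\|_F^2\geq\bigl(\mathcal{F}_S(\ket{\psi})\sum_{x\in V^*}p_\psi(x)\bigr)^2$; and (iii) expanding $\Pi=2^{-n}\sum_{x\in V^*}\epsilon(x)W_x$ and applying Cauchy--Schwarz over the $2^n$ Pauli coefficients yields $\mathcal{F}_S(\ket{\psi})^2\leq\sum_{x\in V^*}p_\psi(x)$, and chaining gives the sixth power. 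This is essentially the argument that underlies Lemma~3.1 of \cite{grewal2022low}, and your packaging via the channel $\cT$ and the Hilbert--Schmidt Cauchy--Schwarz against the stabilizer projector is a clean, correct way to present it.
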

This upper bound is enough to prove that Bell difference sampling provides a one-sided stabilizer tester. The rest of this note is concerned with providing a corresponding lower bound.

\section{A generalized uncertainty relation}

In this section we introduce a generalized uncertainty relation, which we will use to prove our main result. 
Roughly speaking, we aim to generalize the following folklore relation for sets of mutually anti-commuting Weyl operators to arbitrary sets of Weyl operators:

\begin{fact}[Folklore, Lemma 4.23 of \cite{arunachalam2024tolerant}]\label{fact:ur0}
  Let $\lrs{A_i}_{i=1}^M$ be a set of mutually anti-commuting
    Weyl operators.
  For any pure state $\psi=\ketbra{\psi}$, we have that  
    $\sum_{j=1}^M\Tr\p{\psi A_j}^2\le 1$.
\end{fact}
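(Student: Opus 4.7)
The plan is to exploit the Clifford-algebra structure that any mutually anti-commuting family of Hermitian square-roots of the identity generates. Each Weyl operator satisfies $A_j^{\dagger}=A_j$ and $A_j^2=I$ (the phase $i^{x_1\cdot x_2}$ in the definition is precisely what makes $W_x$ Hermitian with $W_x^2=I$), so the set $\{A_j\}_{j=1}^M$ generates a representation of the real Clifford algebra, and this rigidity is exactly what will bound the sum of squared expectations.

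Concretely, I would introduce the coefficients $c_j := \Tr(\psi A_j) \in \mathbb{R}$ and form the Hermitian operator $B := \sum_{j=1}^M c_j A_j$. The main computation is $B^2$: expanding,
\[
B^2 \;=\; \sum_{j=1}^M c_j^2\, A_j^2 \;+\; \sum_{1\le j<k\le M} c_j c_k\, (A_j A_k + A_k A_j),
\]
where the first sum simplifies by $A_j^2 = I$, and every cross term vanishes because $A_j A_k = -A_k A_j$ whenever $j\neq k$. Hence $B^2 = \bigl(\sum_{j} c_j^2\bigr) I$, which immediately gives $\|B\|_{\mathrm{op}} = \sqrt{\sum_j c_j^2}$.

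To conclude, I would compute $\Tr(\psi B)$ in two ways. By linearity of trace, $\Tr(\psi B) = \sum_j c_j \Tr(\psi A_j) = \sum_j c_j^2$. On the other hand, since $\psi$ is a pure state, $|\Tr(\psi B)| \le \|B\|_{\mathrm{op}} = \sqrt{\sum_j c_j^2}$. Combining these yields $\sum_j c_j^2 \le \sqrt{\sum_j c_j^2}$, i.e.\ $\sum_{j=1}^M \Tr(\psi A_j)^2 \le 1$, as claimed. There is no real obstacle here: the entire argument rests on the single identity $B^2 = \bigl(\sum_j c_j^2\bigr) I$, which is just the defining relation of the Clifford algebra generated by the $A_j$'s, after which a one-line operator-norm bound closes the proof.
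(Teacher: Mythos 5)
Your proof is correct and is essentially the paper's own argument: the paper treats this fact as folklore and proves the generalization (Lemma \ref{lemma:ur-obversation}) by exactly your route, setting $w_j=\Tr(\psi A_j)$, bounding $\sum_j \Tr(\psi A_j)^2=\bra{\psi}\sum_j w_j A_j\ket{\psi}$ by the operator norm of $B=\sum_j w_j A_j$, and rearranging. Your identity $B^2=\big(\sum_j w_j^2\big)I$ from the anti-commutation relations is precisely the statement $\Psi_0(A)=1$ in the paper's notation, so the two arguments coincide.
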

Let us now construct a generalization. Given an arbitrary set $A$ of Weyl operators, define $\Gamma_A$ as 
  the \emph{anti-commutation graph} of $A$. 
To be more specific, the vertex set of $\Gamma_A$ is exactly
  $A$, and for any $A_i,A_j\in A$, they are connected by an 
  edge if and only if $A_i$ and $A_j$ anti-commute.\\
  
Taking a slight detour, we can also define the set of ``normalized Hamiltonians'' associated with a set of Weyl operators $A$. We will be interested in the operator norm of these Hamiltonians, maximized over all possible Hamiltonians. For a set $A = \lrs{A_i}_{i=1}^M$ of Weyl operators, we define 
\begin{equation}
  \Psi_0\p{A} = \max_{a\in \R^M}
    \lrs{
      \norm*{H_A\p{a}^2}=\lambda_{\text{max}}\p{H_A(a)^2}
      \ \Bigg\vert\ 
      H_A\p{a}=\sum_{i=1}^M a_iA_i, 
      \norm{a}_2=\p{\sum_{i=1}^Ma_i^2}^{\f{1}{2}}=1
    }.
\end{equation}
Our key observation is as follows:
\begin{lemma}\label{lemma:ur-obversation}
  Let $\lrs{A_i}_{i=1}^M$ be a set of Weyl operators.
  For any pure state $\psi=\ketbra{\psi}$, we have that
  \begin{equation}
    \sum_{i=1}^M\Tr\p{\psi A_i}^2\le \Psi_0\p{A}.
  \end{equation}
\end{lemma}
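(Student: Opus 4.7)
The plan is to use a duality-style argument: the left-hand side $\sum_i \Tr(\psi A_i)^2$ should be realized as the squared inner product between the vector of expectation values and itself, which I can reinterpret as the expectation of a specific optimally-chosen Hamiltonian $H_A(a)$ in the state $\psi$.

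Concretely, I would set $c_i := \Tr(\psi A_i) \in \R$ (real because each Weyl operator is Hermitian), let $\|c\|_2 = (\sum_i c_i^2)^{1/2}$, and dispose of the trivial case $c = 0$. Otherwise, choose $a := c/\|c\|_2 \in \R^M$, so $\|a\|_2 = 1$ and $a$ is a feasible vector for the maximization defining $\Psi_0(A)$. Then by linearity
\begin{equation}
\Tr(\psi H_A(a)) = \sum_{i=1}^M a_i \Tr(\psi A_i) = \frac{1}{\|c\|_2}\sum_{i=1}^M c_i^2 = \|c\|_2.
\end{equation}

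Next I would invoke the elementary bound $|\Tr(\psi X)| \le \|X\|$ valid for any pure state $\psi$ and any Hermitian operator $X$ (it is just the variational characterization of the operator norm). Applied to $X = H_A(a)$ this yields $\|c\|_2 \le \|H_A(a)\|$, and squaring gives
\begin{equation}
\sum_{i=1}^M \Tr(\psi A_i)^2 = \|c\|_2^2 \le \|H_A(a)\|^2 = \|H_A(a)^2\| = \lambda_{\max}(H_A(a)^2) \le \Psi_0(A),
\end{equation}
where the last step uses that $a$ is unit-norm and hence admissible in the definition of $\Psi_0$.

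There is really no main obstacle here: the lemma is a straightforward Cauchy--Schwarz-in-disguise, and the whole content is the clever choice $a \propto c$ which turns the sum of squared expectations into a single expectation of an operator whose norm is controlled by $\Psi_0(A)$. The nontrivial work of this section will be in the subsequent lemmas that actually estimate $\Psi_0(A)$ via the Lov\'asz theta function of the anti-commutation graph $\Gamma_A$; this lemma is merely the bridge between those spectral estimates and the quantity $\sum_i \Tr(\psi A_i)^2$ that appears in the stabilizer-testing analysis.
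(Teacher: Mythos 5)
Your proof is correct and is essentially the same argument as the paper's: you choose $a$ proportional to the vector of expectation values $c_i = \Tr(\psi A_i)$, use $|\Tr(\psi X)| \le \|X\|$ to bound $\|c\|_2 \le \|H_A(a)\|$, and invoke feasibility of the unit-norm $a$ in the definition of $\Psi_0(A)$. The only cosmetic difference is that you normalize before applying the operator-norm bound, whereas the paper applies the bound with the unnormalized vector $w$ and then rescales via $\|H_A(w)\| = \|w\|_2\,\|H_A(w/\|w\|_2)\|$.
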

\begin{proof}
  Consider the vector $w\in \R^{M}$ where 
    $w_i := \Tr \p{\psi A_i} = \bra{\psi}A_i\ket{\psi}$. 
  Notice that
  \begin{equation}
    \sum_{i=1}^M \Tr\p{\psi A_i}^2 
      = \sum_{i=1}^M \bra{\psi}A_i\ketbra{\psi}A_i\ket{\psi}
      = \bra{\psi}\sum_{i=1}^M w_i A_i\ket{\psi} 
      \le \norm*{H_A\p{w}},
  \end{equation}
  where $H_A\p{w}=\sum_{i=1}^{M}w_iA_i$.
  Since $\sum_{i=1}^M \Tr\p{\psi A_i}^2 = \norm{w}_2^2$ and
    $\norm*{H_A\p{w}}=\norm*{w}_2 \norm*{H_A\p{w/\norm*{w}_2}}$,
    we conclude that
  \begin{equation}
    \sum_{i=1}^M \Tr\p{\psi A_i}^2
      = \norm{w}_2^2
      \le \norm*{w}_2 \norm*{H_A\p{w/\norm*{w}_2}}
      \le \norm*{w}_2 \sqrt{\Psi_0\p{A}},
  \end{equation}
  since $\norm*{w/\norm*{w}_2}_2=1$.
  We can rearrange and square the above to obtain the 
    lemma statement. 
\end{proof}
We note that this quantity was also studied by \cite{anshuetz2024strongly} under the name "commutation index". The uncertainty relation we are about to show was also discovered independently (and earlier) in \cite{de2023uncertainty}. 
Next, we want to relate the quantity $\Psi_0\p{A}$ to a property of the anti-commutation graph~$\Gamma_A$. To do this, we leverage a result by Hastings \& O'Donnell \cite{hastings2022optimizing}, derived in the context of fermionic Hamiltonian optimization. Concretely, we need the following definition and proposition.

\begin{definition}[Definition 4.6 of \cite{hastings2022optimizing}]
  For a graph $\Gamma = \p{[M], E}$, we define
  \begin{equation}
    \Psi\p{\Gamma} = \max_{a\in \R^M}
      \lrs{
        \text{Opt}\p{l^2} 
        \ \Bigg\vert\ 
        l = \sum_{i=1}^{M} a_i\chi_i\in \mathfrak{C}\p{\Gamma},
        \norm{a}_2=\p{\sum_{i=1}^Ma_i^2}^{\f{1}{2}}=1
      }.
  \end{equation}
\end{definition}
Here, $\mathfrak{C}\p{\Gamma}$ is any matrix representation of the anti-commutation relations encoded by the graph~$\Gamma$. Note that in particular, any set of Weyl operators $A$ is a representation of its anti-commutation graph $\Gamma_A$ (this is discussed in greater detail in Section 2.6 of \cite{hastings2022optimizing}). Furthermore, this implies that $\Psi_0\p{\Gamma}\leq\Psi\p{\Gamma}$ and the following graph-theoretic characterization is also provided:

\begin{proposition}[Proposition 4.8 from 
  \cite{hastings2022optimizing}]\label{prop:HOprop}
  For any graph $\Gamma$, we have $\Psi\p{\Gamma} \le 
    \vartheta\p{\Gamma}$.
\end{proposition}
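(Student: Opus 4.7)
The plan is to construct, from an extremal Hamiltonian $H$ and a top eigenvector of $H^2$, a matrix $P$ that is feasible for the SDP defining $\vartheta(\Gamma)$ and whose ``all-ones trace'' $\Tr(P \mathbb{J})$ recovers exactly the quantity we need to upper bound. Fix any matrix representation $\chi_1,\ldots,\chi_M$ of $\Gamma$ with $\chi_i^2 = I$; since the resulting bound will not depend on the choice of representation, this will suffice.

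First, I would unpack $\Psi$. Pick $a \in \mathbb{R}^M$ with $\|a\|_2 = 1$ achieving the maximum, set $H := \sum_i a_i \chi_i$, and let $\rho = \ketbra{\phi}$ with $\ket{\phi}$ a top eigenvector of $H^2$. Expanding,
\begin{equation}
  \|H^2\| = \Tr(\rho H^2) = \sum_{i,j} a_i a_j M_{ij}, \qquad M_{ij} := \Tr(\rho \chi_i \chi_j).
\end{equation}
The matrix $M$ is the Gram matrix of the vectors $\chi_i \ket{\phi}$, hence Hermitian PSD, and has unit diagonal since $\chi_i^2 = I$. On an edge $(i,j) \in E$, anti-commutation $\chi_i\chi_j + \chi_j\chi_i = 0$ forces $M_{ij}$ to be purely imaginary; on a non-edge, commutation makes $M_{ij}$ real.

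Second, I would realify and dilate. Because $a$ is real, $a^\top M a = a^\top \tilde{M} a$ where $\tilde{M} := \mathrm{Re}(M)$; this matrix is real symmetric, PSD on real vectors (since $z^\top \tilde{M} z = \mathrm{Re}(z^* M z) \geq 0$), with $\tilde{M}_{ii} = 1$ and $\tilde{M}_{ij} = 0$ on every edge. Setting $D := \mathrm{diag}(a_1,\ldots,a_M)$ and $P := D\tilde{M}D$, the matrix $P$ is PSD (a PSD matrix conjugated by a real diagonal remains PSD), satisfies $\Tr(P) = \sum_i a_i^2 = 1$, and has $P_{ij} = a_i a_j \tilde{M}_{ij} = 0$ whenever $(i,j) \in E$, so $P$ is primal-feasible for the SDP defining $\vartheta(\Gamma)$. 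Since
\begin{equation}
  \Tr(P \mathbb{J}) = \sum_{i,j} a_i a_j \tilde{M}_{ij} = a^\top \tilde{M} a = \|H^2\|,
\end{equation}
feasibility yields $\|H^2\| \leq \vartheta(\Gamma)$, and taking the supremum over $a$ and over admissible representations gives $\Psi(\Gamma) \leq \vartheta(\Gamma)$.

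The main obstacle I anticipate is the realification step: one must verify that passing from the Hermitian Gram matrix $M$ to its real part $\tilde{M}$ preserves both positive semi-definiteness and the vanishing pattern on edges of $\Gamma$. Both properties hold almost by inspection -- PSD-ness from $\mathrm{Re}(z^* M z) \geq 0$ for real $z$, and the edge pattern from the fact that anti-commutation forces $M_{ij}$ to be pure imaginary -- but overlooking either would break the SDP feasibility of $P$. Once this is in place, the diagonal conjugation $P = D\tilde{M}D$ is the standard device that converts the quadratic form $a^\top \tilde{M} a$ into an SDP feasibility witness.
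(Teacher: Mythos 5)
The paper does not prove this proposition; it is imported verbatim from Hastings and O'Donnell \cite{hastings2022optimizing} and used as a black box, so there is no in-paper argument to compare against. Your blind reconstruction is correct and is the natural primal-SDP feasibility argument. The heart of it is exactly the observation you flag as the potential obstacle: for a top eigenvector $\ket{\phi}$ of $H^2$, the matrix $M_{ij}=\bra{\phi}\chi_i\chi_j\ket{\phi}$ is a Hermitian PSD Gram matrix with unit diagonal, with $M_{ij}$ purely imaginary on edges of $\Gamma$ (anticommutation forces $M_{ji}=-M_{ij}$ and Hermiticity forces $M_{ji}=\overline{M_{ij}}$) and real on non-edges, so $\tilde M=\mathrm{Re}(M)$ is a real symmetric PSD matrix with unit diagonal that vanishes on $E(\Gamma)$ and satisfies $a^\top\tilde M a=a^\top M a=\Tr(\rho H^2)=\norm{H^2}$; conjugating by $D=\mathrm{diag}(a)$ yields a feasible $P=D\tilde M D$ for the $\vartheta$-SDP with $\Tr(P\mathbb{J})=\norm{H^2}$. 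All of these steps check out. The one hypothesis you should state explicitly, rather than leave implicit, is that the representation $\{\chi_i\}$ consists of \emph{self-adjoint} involutions, so that $M$ really is the Gram matrix of the vectors $\chi_i\ket{\phi}$ and hence PSD; this holds both for the Weyl-operator representations used in this paper and for the Clifford/Majorana representations in \cite{hastings2022optimizing}, so the assumption costs nothing, but the claim that the proof is representation-agnostic needs that qualifier.
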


With this we can show our generalized uncertainty relation. 

\begin{lemma}[Generalized uncertainty relation]\label{lem:gen_unc}
  Let $\lrs{A_i}_{i=1}^M$ be a set of Weyl operators with an associated anti-commutation graph $\Gamma_A$. 
  For any pure state $\psi=\ketbra{\psi}$, we have that
  \begin{equation}
    \sum_{i=1}^M\Tr\p{\psi A_i}^2 \le \Psi_0\p{A} 
      \le \Psi\p{\Gamma_A} \le \vartheta\p{\Gamma_A}.
  \end{equation}
\end{lemma}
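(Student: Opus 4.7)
The statement is essentially the concatenation of three inequalities, all of which are already available in the preceding text, so my plan is simply to chain them together and indicate where each one comes from.

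First, I would note that the leftmost inequality $\sum_{i=1}^M \Tr(\psi A_i)^2 \le \Psi_0(A)$ is exactly \Cref{lemma:ur-obversation}, which we just proved by setting $w_i = \Tr(\psi A_i)$ and estimating $\langle \psi|H_A(w)|\psi\rangle \le \|H_A(w)\| = \|w\|_2 \, \|H_A(w/\|w\|_2)\| \le \|w\|_2 \sqrt{\Psi_0(A)}$, after which squaring and rearranging delivers the desired bound. No new input is required here.

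For the middle inequality $\Psi_0(A) \le \Psi(\Gamma_A)$, I would recall that $\Psi_0(A)$ is a maximum of $\|H_A(a)^2\|$ over unit coefficient vectors $a$, with the Weyl operators themselves plugged in, whereas $\Psi(\Gamma_A)$ is defined as the analogous maximum taken over \emph{all} matrix representations $\mathfrak{C}(\Gamma_A)$ of the anti-commutation graph $\Gamma_A$. As pointed out in the paragraph immediately after the definition of $\Psi$, any set of Weyl operators $A$ is one valid representation of its own anti-commutation graph $\Gamma_A$, so the feasible region for $\Psi(\Gamma_A)$ contains the one for $\Psi_0(A)$ and the inequality follows from the fact that we are maximizing over a larger set.

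Finally, the rightmost inequality $\Psi(\Gamma_A) \le \vartheta(\Gamma_A)$ is exactly \Cref{prop:HOprop}, quoted directly from Hastings and O'Donnell. Concatenating the three bounds yields the stated chain. The only genuinely non-trivial step is the last one, but since we are allowed to invoke \Cref{prop:HOprop} as a black box, there is no real obstacle to carrying out the proof; the lemma serves more as a packaging of existing results into a single statement tailored to our later use than as an independent mathematical claim.
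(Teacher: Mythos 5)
Your proposal is correct and follows essentially the same route as the paper: the first inequality is \Cref{lemma:ur-obversation}, the second follows from the definition of $\Psi$ together with the observation that the Weyl operators $A$ form one particular matrix representation of $\Gamma_A$, and the third is \Cref{prop:HOprop}. Your explanation of the middle step is in fact slightly more explicit than the paper's one-line "by definition" justification.
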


\begin{proof}
  The first inequality is directly from 
    \Cref{lemma:ur-obversation}.
  By definition, we have $\Psi_0\p{\Gamma_A}\le 
    \Psi\p{\Gamma_A}$, which is the second inequality.
  The last inequality follows from \Cref{prop:HOprop}.
\end{proof}

\section{A tolerant testing algorithm}
In this section we prove our main theorem. We start from the core fact derived in \cite{arunachalam2024tolerant}, namely that if Bell difference sampling succeeds with high probability, there exists a subspace $V\subset \mathbb{F}_2^{2n}$ of Weyl operators  with high expected probability mass under the characteristic distribution $p_{\psi}$:
\begin{restatable}{theorem}{toltestmain}\label{thm:tol_test_main}
Let $\ket{\psi}$ be an $n$-qubit quantum state such that $\mathbb{E}_{x\sim q_{\psi}}\lrb{2^np_{\psi}(x)} \geq \gamma$ for $\gamma\in [0,1]$ and $2^n\ge \f{C''\ln\p{C'''/\gamma}}{\gamma^3}$ with $C''$ and $C'''$ some constants. 
Then there exists a subspace $V$ of $\mathbb{F}_{2}^{2n}$ such that:
\begin{equation}\label{eq:weight_low_bound}
\sum_{x\in V} \abs*{\bra{\psi}W_x\ket{\psi}}^2 \geq C_1 \gamma^{55} \abs*{V},
\end{equation}
and
\begin{equation}
\sum_{x\in V} \abs*{\bra{\psi}W_x\ket{\psi}}^2 \geq C_2 \gamma^{57} 2^n,
\end{equation}
with $C_1$ and $C_2$ some constants.
\end{restatable}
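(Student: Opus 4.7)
The plan is to follow the architecture of the argument of \cite{arunachalam2024tolerant}, which turns a lower bound on $\E_{x\sim q_\psi}[2^n p_\psi(x)]$ into a subspace with heavy characteristic mass, while replacing their conjectural structural step by the generalized uncertainty relation of \Cref{lem:gen_unc}.

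First, I would extract a heavy set of Weyl operators. The pointwise bound $p_\psi(x)\leq 2^{-n}$ from \Cref{prop:property-ppsi}, together with the hypothesis $\E_{x\sim q_\psi}[2^n p_\psi(x)] \geq \gamma$ and a Markov-type argument, produces a set $S \subseteq \F_2^{2n}$ on which $2^n p_\psi(x) \gtrsim \gamma$ pointwise and which carries mass $\Omega(\gamma)$ under both $p_\psi$ and $q_\psi$. Because $q_\psi = 4^n (p_\psi * p_\psi)$, the $q_\psi$-mass on $S$ is a $p_\psi$-weighted count of additive triples $x+y=z$ with $z\in S$, so $S$ has large additive energy with respect to $p_\psi$. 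Feeding this into a quantitative Balog--Szemer\'edi--Gowers / Sanders-type statement over $\F_2^{2n}$ (as in \cite{arunachalam2024tolerant}) yields a refined subset $S'\subseteq S$ of small doubling and, from it, a subspace $V \subseteq \F_2^{2n}$ that contains $S'$ up to a polynomial-in-$\gamma$ density loss. This already delivers the first inequality $\sum_{x\in V} 2^n p_\psi(x) \gtrsim \mathrm{poly}(\gamma) \cdot |V|$; chaining the polynomial losses through each black-box accounts for the exponent $55$.

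The crux, and the place where \cite{arunachalam2024tolerant} invoked a conjecture, is that the subspace $V$ produced above need not be isotropic: many of its elements may pairwise anti-commute, and without extra information one cannot rule out that the heavy Weyl operators clustered in $V$ interfere in a way that spoils the bound. \Cref{lem:gen_unc} breaks this barrier: for any candidate set $A$ appearing in the analysis, $\sum_{x\in A}|\langle \psi|W_x|\psi\rangle|^2 \leq \vartheta(\Gamma_A)$, so if $A$ is heavy then $\Gamma_A$ must have large Lov\'asz theta. Together with the multiplicativity, additivity, and monotonicity of $\vartheta$ under $\boxtimes$, $\sqcup$, and subgraph inclusion (Facts \ref{fact:theta-product}, \ref{fact:theta-sum}, \ref{fact:theta-mono}), large $\vartheta(\Gamma_A)$ forces $A$ to contain a large commuting (equivalently, isotropic in $\F_2^{2n}$) substructure---exactly the structural output that the unproven additive-combinatorics statement was meant to supply. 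The second inequality $\sum_{x\in V}|\langle\psi|W_x|\psi\rangle|^2 \gtrsim \gamma^{57} \cdot 2^n$ then follows from the first combined with a separate lower bound $|V|\gtrsim \gamma^{O(1)}\cdot 2^n$, obtained by a direct concentration argument that uses the assumption $2^n \geq C''\gamma^{-3}\ln(C'''/\gamma)$.

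The main obstacle will be making the uncertainty relation genuinely substitute for the missing conjecture: the Lov\'asz theta bound is only one-sided, so one has to show that its consequences (existence of large commuting subsets via $\vartheta$-additivity on independent sets) are strong enough at the precise point where the original argument exploited the full force of the conjecture. A secondary nuisance is tracking the polynomial exponents sharply through the chain of Markov, BSG/Sanders, and uncertainty-relation steps; this is what ultimately produces the large but explicit exponents $55$ and $57$, which one should not expect to be tight.
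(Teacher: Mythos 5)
Your first paragraph is roughly on track, and your middle paragraph is a fundamental misunderstanding of where the uncertainty relation lives in the argument.

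\Cref{thm:tol_test_main} is a purely additive-combinatorics statement; its proof in the paper invokes \Cref{thm:refined-Theorem4.5} (which constructs a heavy, nearly linear set $S$ via a \emph{probabilistic} subsampling of the level set $\{x:2^np_\psi(x)\geq\gamma/4\}$, with each element kept with probability $2^np_\psi(x)$), then the modified Balog--Szemer\'edi--Gowers theorem, then the \emph{Polynomial} Freiman--Ruzsa theorem \cite{gowers2023conjecture,gowers2024marton,liao2024improved} (not a Sanders-type quasipolynomial bound, which would wreck the polynomial gap), followed by pigeonhole and the translate argument of Corollary 4.11 of \cite{arunachalam2024tolerant}. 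The generalized uncertainty relation and the Lov\'asz theta function play no role at all in this step. They enter \emph{after} \Cref{thm:tol_test_main}, in \Cref{lemma:ur-final} and \Cref{coroll:main-lowerbound}, where one starts from the subspace $V$ that \Cref{thm:tol_test_main} hands you, upper-bounds $\vartheta(\Gamma_V)\leq |V|/2^k$ to force $k$ small, and covers $V$ by $2^k+1$ isotropic subspaces to find a heavy isotropic $V_0\subseteq V$. Isotropy is not mentioned in the statement you were asked to prove, and the uncertainty relation is not needed to prove it; your ``crux'' paragraph belongs to a different lemma in the paper.

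Your derivation of the second inequality is also incorrect. You propose to get it from the first inequality together with $|V|\gtrsim\gamma^{O(1)}2^n$. But PFR gives an \emph{upper} bound $|V|\leq|S'|$ and no lower bound; $|V|$ can be much smaller than $2^n$. In the paper the second inequality comes directly from the pigeonhole step: some translate $V+y$ contains a $(2K)^{-9}$-fraction of $S'$, every $x\in S'$ satisfies $2^np_\psi(x)\geq\gamma/4$, and $|S'|\gtrsim\gamma^2\,2^n$ from \Cref{thm:refined-Theorem4.5} and BSG, so the mass trapped in $V+y$ (hence in $V$) already scales like $\gamma^{57}2^n$. Finally, the assumption $2^n\geq C''\ln(C'''/\gamma)/\gamma^3$ is consumed inside \Cref{thm:refined-Theorem4.5} (to make the random construction of $S$ concentrate), not in a ``direct concentration argument'' at the end of the proof of \Cref{thm:tol_test_main}.
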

We prove this theorem in the appendix. A version of this theorem, that does not include the mild $n$ dependence, can also be proven by following the logic of \cite{arunachalam2024tolerant} through their Section $4$ towards Corollary 4.11, and Claims 4.12 and 4.13 (together with Lemma 3.3), and taking care to tally up all powers of $\gamma$ (which is not done explicitly in \cite{arunachalam2024tolerant}). This yields a similar theorem but with powers of $\gamma$ equal to $406$ and $413$ instead of $55$ and $57$. \\

Our goal will now be to find an isotropic subspace $V_0$ of the subspace $V$ that also has a large probability mass.
We will do this by providing an upper bound and a lower bound on the Lov{\'a}sz theta of the anti-commutation graph $\Gamma_V$ of (Weyl operators associated to) the subspace $V$. To do this we note that for the Weyl operators associated to a subspace $V$ there exist integers $m$, $k$ (with $k+m \leq n$) and an $n$-qubit Clifford unitary $U$ such that (abusing notation somewhat):
\begin{equation}
    UVU^\dagger = \langle Z_1,X_1,\dots, Z_k,X_k,Z_{k+1},\dots,Z_{k+m}\rangle \defeq W.
\end{equation}
where $P_i$ is a Pauli operator on the $i$'th qubit. 
With this structure, and the uncertainty relation we derived above we can control $\vartheta(\Gamma_W) = \vartheta(\Gamma_V)$ from both above and below.
\begin{lemma}\label{lemma:ur-final}
  Let $W$, $k$, $m$ be defined as above, with
    associated anti-commutation graph $\Gamma_W$.
  For any pure state $\psi=\ketbra{\psi}$, we have that
  \begin{equation}
    C_1 \gamma^{55} |W| \leq \sum_{x \in W}\Tr\p{\psi W_x}^2 \le \vartheta\p{\Gamma_W}
      \le \f{\abs*{W}}{2^k} = 2^{k+m}.
  \end{equation}
\end{lemma}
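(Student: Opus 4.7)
The chain of inequalities splits naturally into three pieces, with different dependencies on the state. For the lower bound, I would apply Theorem~\ref{thm:tol_test_main} to the original state to obtain a subspace $V \subseteq \F_2^{2n}$ with $\sum_{x \in V}\Tr(\psi_0 W_x)^2 \ge C_1\gamma^{55}|V|$. Since $W = UVU^\dagger$ for a Clifford $U$, and Clifford conjugation acts on Weyl indices by a symplectic automorphism $\phi$ satisfying $UW_xU^\dagger = \pm W_{\phi(x)}$, this bound transfers verbatim to $W$ for the Clifford-rotated state $\psi = U\psi_0 U^\dagger$: $\sum_{y \in W}\Tr(\psi W_y)^2 = \sum_{x \in V}\Tr(\psi_0 W_x)^2 \ge C_1\gamma^{55}|W|$. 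The middle inequality $\sum \le \vartheta(\Gamma_W)$ holds for every pure state and is a direct invocation of the generalized uncertainty relation, Lemma~\ref{lem:gen_unc}, applied to the Weyl set $\{W_x\}_{x\in W}$.

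The substantive step is the rightmost inequality $\vartheta(\Gamma_W) \le 2^{k+m} = |W|/2^k$, which depends only on $W$. My plan is to exhibit an explicit dual-feasible matrix. Define the character matrix $B \in \R^{W \times W}$ by $B_{xy} = (-1)^{[x,y]}$. Since $[\cdot,\cdot]$ is symmetric and alternating modulo $2$, $B$ is symmetric with $B_{xx} = 1$, and $B_{xy} = 1$ whenever $x \ne y$ commute (i.e., for every non-edge of $\Gamma_W$). This is exactly the feasibility condition for the standard SDP dual of $\vartheta$: for any such matrix $A$, the bound $\vartheta(\Gamma) \le \lambda_{\max}(A)$ holds, since for any primal-feasible $\rho$ we have $\Tr(\rho \mathbb{J}) = \Tr(\rho A) \le \lambda_{\max}(A)\Tr(\rho) = \lambda_{\max}(A)$ (the first equality uses $\rho_{xy} = 0$ on edges together with $A_{xy} = 1$ off the edge set, and the inequality uses $\lambda_{\max}(A)I - A \succeq 0$).

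The spectral computation is short. Bilinearity of the symplectic form gives $[x,z]+[z,y]=[x+y,z]$, whence
\begin{equation*}
(B^2)_{xy} = \sum_{z \in W} (-1)^{[x+y, z]},
\end{equation*}
which equals $|W|$ when the linear functional $[x+y,\cdot]$ vanishes on $W$ --- equivalently, when $x+y$ lies in the radical $V_0 = \langle Z_{k+1}, \ldots, Z_{k+m}\rangle$ of $W$ --- and equals $0$ otherwise. Thus $B^2 = |W|\cdot P$, where $P$ is block-diagonal with $|W|/|V_0| = 2^{2k}$ all-ones blocks of size $|V_0| = 2^m$. Each all-ones block has top eigenvalue $2^m$, so $\lambda_{\max}(B^2) = |W|\cdot 2^m = 2^{2k+2m}$ and hence $\lambda_{\max}(B) = 2^{k+m}$. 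Combined with the dual bound this yields $\vartheta(\Gamma_W) \le 2^{k+m}$, closing the chain.

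The main obstacle I anticipate is that the excerpt only presents the primal definition of $\vartheta$, so the dual bound $\vartheta(\Gamma) \le \lambda_{\max}(A)$ needs the short standalone justification above. A purely ``Facts-based'' alternative is to observe $\Gamma_{V_1}\boxtimes \bar K_{2^m}\subseteq \Gamma_W$ (where $V_1 = \langle Z_1, X_1,\dots,Z_k,X_k\rangle$ and the strong product is the disjoint union of $2^m$ copies of $\Gamma_{V_1}$, which is contained in $\Gamma_W$ since anti-commutation in $W$ only depends on the $V_1$-components), which via Facts~\ref{fact:theta-product}, \ref{fact:theta-sum} and \ref{fact:theta-mono} yields $\vartheta(\Gamma_W) \le 2^m\vartheta(\Gamma_{V_1})$; but the residual bound $\vartheta(\Gamma_{V_1}) \le 2^k$ on the full $k$-qubit Pauli anti-commutation graph still requires essentially the same character calculation on $V_1$, so I would go with the direct computation on $W$.
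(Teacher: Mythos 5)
Your argument is correct. For the first two inequalities you follow the same route as the paper does: pull the subspace bound of Theorem~\ref{thm:tol_test_main} across the Clifford conjugation that carries $V$ to $W$, then apply Lemma~\ref{lem:gen_unc}; you are in fact more explicit than the paper about which state the lower bound refers to (the Clifford-rotated state $U\psi_0 U^\dagger$, since the lemma's ``for any pure state'' really only covers the middle and right inequalities). The genuine departure is in the bound $\vartheta(\Gamma_W)\le 2^{k+m}$. The paper observes $\Gamma_W\supseteq \Gamma_{\mathcal{P}_k}\boxtimes \overline{K_{2^m}}$, applies Facts~\ref{fact:theta-mono} and~\ref{fact:theta-product} to reduce to $\vartheta(\Gamma_{\mathcal{P}_k})\cdot 2^m$, and then uses Fact~\ref{fact:theta-sum} together with the cited strongly-regular-graph formula $\vartheta(\overline{\mathrm{Sp}(2k,2)})=2^k-1$ (Theorem~3.29 of \cite{sason2023observations}) to finish. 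You instead produce the explicit dual-SDP certificate $B_{xy}=(-1)^{[x,y]}$: it is feasible because $[x,y]=0$ on the diagonal and on non-edges, and squaring gives $B^2=|W|\cdot P$ where $P$ is the indicator of pairs in the same coset of the radical $V_0$, so $\lambda_{\max}(B)=2^{k+m}$. Both routes are sound. Yours is self-contained and elementary --- it replaces the imported symplectic-graph eigenvalue fact (itself an eigenvalue computation for a strongly regular graph) with a short character sum, at the modest one-time cost of supplying the dual bound $\vartheta(\Gamma)\le\lambda_{\max}(A)$, which you do correctly. The paper's route is one line given its citations but is hostage to the external result. You are also correct that your ``Facts-based alternative'' would still need essentially this character computation, restricted to $\langle Z_1,X_1,\dots,Z_k,X_k\rangle$: that restricted computation is exactly what the cited symplectic-graph theorem packages.
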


\begin{proof}
The lower bound is given by \Cref{lem:gen_unc} and \Cref{thm:tol_test_main}.  It is thus sufficient to show $\vartheta\p{\Gamma_W} 
    \le \f{\abs*{W}}{2^k}$.
  It is straightforward to see that
  \begin{equation}
    \Gamma_W \supseteq \Gamma_{\mathcal{P}_{k}} 
      \boxtimes \overline{K_{2^m}},
  \end{equation}
  where $\Gamma_{\mathcal{P}_{k}}$ is the anti-commutation graph
    of the (Weyl operators in the) $k$-fold Pauli group, and $\overline{K_{2^m}}$ is the 
    complement graph of the complete graph with $2^m$ vertices.
  According to properties of the Lov{\'a}sz theta 
    (\Cref{fact:theta-mono}, \Cref{fact:theta-product}), we have
  \begin{equation}
    \vartheta\p{\Gamma_W} 
      \le \vartheta\p{\Gamma_{\mathcal{P}_{k}}
        \boxtimes \overline{K_{2^m}}}
      = \vartheta\p{\Gamma_{\mathcal{P}_{k}}}
        \cdot \vartheta\p{\overline{K_{2^m}}}.
  \end{equation}
  Note that $\Gamma_{\mathcal{P}_{k}}$ is actually the 
    complement graph of the \emph{symplectic graph} 
    $\textrm{Sp}\p{2k,2}$ defined in~\cite{sason2023observations} 
    with one extra isolated vertex 
    (corresponding to the identity operator).
  According to \Cref{fact:theta-sum} we have 
  $\vartheta\p{\Gamma_{\mathcal{P}_{k}}}
    =\vartheta\p{\overline{\textrm{Sp}(2k,2)}}+1$.
  With $\vartheta\p{\overline{\textrm{Sp}(2k,2)}} = 2^k-1$ (from 
    Theorem $3.29$ of \cite{sason2023observations}),
    $\vartheta\p{\overline{K_{2^m}}} = 2^m$ (directly from the definition of $\vartheta$), and $\abs*{W}=2^{2k+m}$, we conclude that
  \begin{equation}
    \vartheta\p{\Gamma_W}\le 2^{k+m} = \f{\abs*{W}}{2^k}.
  \end{equation}
\end{proof}

As a conclusion of this section, We have the following 
  corollary, which will be used to prove a polynomial gap
  tolerant testing theorem for stabilizer states.

\begin{corollary}\label{coroll:main-lowerbound}
Let $\ket{\psi}$ be a pure state, and let $V$ be the subspace given in \Cref{thm:tol_test_main}, with associated anti-commutation graph $\Gamma_V$.
There exists an isotropic subspace $V_0\subseteq V$, 
    such that 
  \begin{equation}
    \sum_{x\in V_0} \Tr\p{\psi W_x}^2\ge 
      \f{C_1 \gamma^{55}}{1+C_1\gamma^{55}} 
        \sum_{x\in V} \Tr\p{\psi W_x}^2.
  \end{equation}
\end{corollary}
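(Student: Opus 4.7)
The plan is to average $\sum_{x\in V_0}\Tr(\psi W_x)^2$ over all maximal isotropic subspaces $V_0\subseteq V$ that contain the radical of $V$, and to invoke \Cref{lemma:ur-final} to show this average already suffices.

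First, I would set $V_c := \{x\in V : [x,y]=0 \text{ for all } y\in V\}$, the radical of the symplectic form restricted to $V$. In the canonical form used in \Cref{lemma:ur-final}, $V_c$ corresponds to $\langle Z_{k+1},\dots,Z_{k+m}\rangle$, so $|V_c|=2^m$, and the quotient $V/V_c$ inherits a non-degenerate symplectic form of dimension $2k$. Every Lagrangian $L\subseteq V/V_c$ then pulls back through the canonical projection $\pi:V\to V/V_c$ to a maximal isotropic subspace $V_0:=\pi^{-1}(L)\subseteq V$ of size $|V_0|=2^{k+m}$ (since $\pi^{-1}(L)$ contains $V_c$ and surjects onto $L$).

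Next, I would average over a uniformly random Lagrangian $L\subseteq V/V_c$. Since $\mathrm{Sp}(2k,\F_2)$ acts transitively on non-zero vectors, a standard double count gives $\Pr_L[v\in L] = (2^k-1)/(2^{2k}-1)=1/(2^k+1)$ for every non-zero $v\in V/V_c$. Writing $S:=\sum_{x\in V}\Tr(\psi W_x)^2$ and $S_c:=\sum_{x\in V_c}\Tr(\psi W_x)^2$, linearity of expectation yields
\begin{equation*}
\E_L\Bigl[\sum_{x\in V_0}\Tr(\psi W_x)^2\Bigr] = S_c + \frac{S-S_c}{2^k+1} \;\geq\; \frac{S}{2^k+1},
\end{equation*}
so some Lagrangian $L$, and hence some maximal isotropic $V_0\subseteq V$, realizes $\sum_{x\in V_0}\Tr(\psi W_x)^2 \geq S/(2^k+1)$.

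Finally, I would combine this with the two bounds in \Cref{lemma:ur-final}. Writing $\alpha:=C_1\gamma^{55}$, the lower bound gives $S\geq \alpha|V| = \alpha\cdot 2^{2k+m}$ and the upper bound gives $S\leq 2^{k+m}$, which together force $2^k\leq 1/\alpha$, hence $2^k+1\leq (1+\alpha)/\alpha$. Plugging into $S_0\geq S/(2^k+1)$ yields $\sum_{x\in V_0}\Tr(\psi W_x)^2\geq \tfrac{\alpha}{1+\alpha}S$, as claimed. The only conceptual hurdle is recognizing that the Lov{\'a}sz-theta upper bound in \Cref{lemma:ur-final} is exactly what pins down the symplectic rank $2k$ of $V$ to be small enough for a uniform average over Lagrangians of $V/V_c$ to yield a factor of order $\alpha/(1+\alpha)$; after that, the argument is a short counting exercise.
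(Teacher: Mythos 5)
Your proposal is correct, and its skeleton matches the paper's: both use the sandwich $C_1\gamma^{55}\abs{V}\le\sum_{x\in V}\Tr(\psi W_x)^2\le\vartheta(\Gamma_V)\le\abs{V}/2^k$ from \Cref{lemma:ur-final} and \Cref{eq:weight_low_bound} to force $2^k\le 1/(C_1\gamma^{55})$, and both then extract an isotropic subspace carrying a $1/(2^k+1)$ fraction of the mass. Where you differ is in how that fraction is obtained. The paper invokes Lemma 12 of \cite{leung2012entanglement} to cover $V$ by $2^k+1$ isotropic subspaces and pigeonholes over the cover; you instead quotient $V$ by its radical $V_c$, pull Lagrangians of the nondegenerate quotient $V/V_c$ back to maximal isotropic subspaces of $V$, and average over a uniformly random Lagrangian, using transitivity of $\mathrm{Sp}(2k,2)$ on nonzero vectors to get $\Pr[v\in L]=1/(2^k+1)$ exactly (your expectation $S_c+(S-S_c)/(2^k+1)\ge S/(2^k+1)$ is right, and the radical terms can only help). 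The payoff of your route is that it is self-contained — it replaces the external covering lemma with an elementary double-counting/symmetry argument and makes the role of the degenerate part of the form explicit — while the paper's route is shorter by outsourcing the combinatorics to a known spread-type covering; the resulting constant $\frac{C_1\gamma^{55}}{1+C_1\gamma^{55}}$ is identical in both arguments.
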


\begin{proof}
  By \Cref{lemma:ur-final} and \Cref{eq:weight_low_bound}, we have that
  \begin{equation}
    C_1 \gamma^{55} \cdot\abs*{V}\le \sum_{x\in V} \Tr\p{\psi W_x}^2
      \le \vartheta\p{\Gamma_V} \le \f{\abs*{V}}{2^k},
  \end{equation}
  which implies $2^k\le \f{1}{C_1}\gamma^{-55}$.
  As a direct consequence of Lemma 12 from
    \cite{leung2012entanglement}, we can thus cover $V$ with
    $2^k+1$ isotropic subspaces of $V$.
  Therefore there must exist a specific isotropic $V_0\subseteq V$,
    such that
  \begin{equation}
    \sum_{P\in V_0} \Tr\p{\psi P}^2
      \ge \f{1}{2^k+1} \sum_{P\in V} \Tr\p{\psi P}^2
      \ge \f{C_1 \gamma^{55}}{1+C_1 \gamma^{55}} 
        \sum_{P\in V} \Tr\p{\psi P}^2.
  \end{equation}
\end{proof}
Note that the existence of $V_0$ lower bounds the stabilizer fidelity of $\ket{\psi}$, since we can always extend this isotropic subspace to a Lagrangian subspace, which gives rise to a stabilizer state. This leads to the main theorem. 
\begin{theorem}\label{fidelitysandwich}
  Let $\ket{\psi}$ be a state and let $\gamma = \mathbb{E}_{x\sim
    q_{\psi}}\lrb{2^np_{\psi}(x)}$. Then there exists some constant  $C$, such that
  \begin{equation}
    C\gamma^{112} \le \mathcal{F}_S(\ket{\psi}) \le \gamma^{\f{1}{6}}.
  \end{equation}
\end{theorem}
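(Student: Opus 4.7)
The plan is to assemble the ingredients already proved above rather than do anything new. The upper bound is essentially free: by definition $\gamma = \mathbb{E}_{x\sim q_\psi}[|\langle\psi|W_x|\psi\rangle|^2]$, so \Cref{fact:main-upperbound} immediately gives $\mathcal{F}_S(\ket{\psi}) \le \gamma^{1/6}$.

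For the lower bound I would first invoke \Cref{thm:tol_test_main} to extract a subspace $V\subseteq \mathbb{F}_2^{2n}$ (assuming $n$ is large enough that the theorem's condition $2^n \ge C''\ln(C'''/\gamma)/\gamma^3$ holds) satisfying
\begin{equation}
\sum_{x\in V}\Tr(\psi W_x)^2 \ge C_2\,\gamma^{57}\,2^n .
\end{equation}
Next I apply \Cref{coroll:main-lowerbound} to obtain an isotropic subspace $V_0\subseteq V$ that still carries a $\frac{C_1\gamma^{55}}{1+C_1\gamma^{55}}$ fraction of this mass. Then I extend $V_0$ to any Lagrangian $V_0'\supseteq V_0$ (always possible in a symplectic space) and plug $V_0'$ into \Cref{prop:main-lowerbound}, using $p_\psi(x) = 2^{-n}\Tr(\psi W_x)^2$ and $p_\psi\ge 0$:
\begin{equation}
\mathcal{F}_S(\ket{\psi}) \;\ge\; \sum_{x\in V_0'} p_\psi(x) \;\ge\; \sum_{x\in V_0} p_\psi(x) \;=\; 2^{-n}\sum_{x\in V_0}\Tr(\psi W_x)^2 \;\ge\; \frac{C_1 C_2}{1+C_1\gamma^{55}}\,\gamma^{112}.
\end{equation}
Since $\gamma\le 1$, the prefactor is at least $C := C_1 C_2/(1+C_1)$, giving the claimed bound. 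The exponent $112$ appears naturally as $55+57$, the sum of the two exponents produced by \Cref{thm:tol_test_main} and the loss in \Cref{coroll:main-lowerbound}.

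It remains to address the regime where $2^n < C''\ln(C'''/\gamma)/\gamma^3$ so that \Cref{thm:tol_test_main} is vacuous. Here I would fall back on the trivial stabilizer-fidelity bound $\mathcal{F}_S(\ket{\psi})\ge 2^{-n}$ (obtained e.g.\ by noting that the computational basis is a stabilizer basis, so some $|\langle i|\psi\rangle|^2\ge 2^{-n}$). Combining with the failure of the hypothesis gives $\mathcal{F}_S \ge \gamma^3/(C''\ln(C'''/\gamma))$, which exceeds $C\gamma^{112}$ on $\gamma\in(0,1]$ after possibly shrinking $C$, since $\gamma^{-109}/\ln(1/\gamma)$ is bounded below on this interval.

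I do not expect real obstacles: the work is almost entirely bookkeeping of exponents, and the two substantive inputs (\Cref{thm:tol_test_main} and the Lov\'asz-theta-based \Cref{coroll:main-lowerbound}) have already been set up. The only mildly delicate point is handling the small-$n$ corner case, but the trivial $2^{-n}$ bound absorbs it cleanly.
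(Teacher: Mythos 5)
Your proof is correct and, for the main chain, is exactly the paper's argument: upper bound from \Cref{fact:main-upperbound}, then \Cref{thm:tol_test_main} $\to$ \Cref{coroll:main-lowerbound} $\to$ extension of $V_0$ to a Lagrangian $\to$ \Cref{prop:main-lowerbound}, yielding the same constant $C=C_1C_2/(1+C_1)$ after using $\gamma\le 1$. The only place you diverge is the regime $2^n < C''\ln(C'''/\gamma)/\gamma^3$: the paper pads $\ket{\psi}$ with $\ket{0}$ qubits (arguing that both $\mathcal{F}_S$ and $\gamma$ are unaffected) so that \Cref{thm:tol_test_main} applies, whereas you instead use the trivial bound $\mathcal{F}_S(\ket{\psi})\ge 2^{-n}$ together with the negation of the hypothesis, giving $\mathcal{F}_S \ge \gamma^3/\bigl(C''\ln(C'''/\gamma)\bigr) \ge C\gamma^{112}$ after shrinking $C$; this is a valid and arguably cleaner way to absorb the corner case, since it avoids having to justify invariance of $\gamma$ under padding. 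One small slip: the quantity you should argue is bounded below is $\gamma^{-109}/\ln(C'''/\gamma)$ (with $C'''>1$), not $\gamma^{-109}/\ln(1/\gamma)$, which degenerates at $\gamma=1$; the fix is immediate since $\gamma^{109}\ln(C'''/\gamma)$ is bounded above on $(0,1]$.
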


\begin{proof}
  The second inequality is exactly \Cref{fact:main-upperbound}.
  We will prove the first inequality. We begin by padding $\ket{\psi}$ with $\ket{0}$ states such that $2^n\ge \f{C''\ln\p{C'''/\gamma}}{\gamma^3}$. Note that this leaves the stabilizer fidelity unchanged (this follows from the contraction identity in App. G of \cite{hayden2016holographic}). Also note that $\gamma\geq \mathcal{F}_S(\ket{\psi})^6 \geq 2^{-6n}$, so this never requires more that $cn$ qubits to do\footnote{This padding procedure is a technical trick to overcome the large-$n$ requirement in the proof of \Cref{thm:tol_test_main}. We believe this is a proof artifact. In fact it can be removed by using the version of \Cref{thm:tol_test_main} from \cite{arunachalam2024tolerant}, at the cost of a substantially higher power of $\gamma$.}.
  From \Cref{thm:tol_test_main} and \Cref{coroll:main-lowerbound},
    there exists a subspace $V$, an isotropic subspace $V_0$ and 
    constants $C_1$, $C_2$, such that:
  \begin{equation}
    \sum_{x\in V_0} \Tr\p{\psi W_x}^2
      \ge 
        \f{C_1 \gamma^{55}}{1+C_1 \gamma^{55}} 
          \sum_{x\in V} \Tr\p{\psi W_x}^2
      \ge
        \f{C_1\gamma^{55}}{1+C_1\gamma^{55}} 
          C_2 \gamma^{57} 2^n 
      =
        \f{C_1C_2\gamma^{112}}{1+C_1\gamma^{55}} 2^n.
  \end{equation}
  We can extend $V_0$ to a Lagrangian subspace $V_0^*$ such that
  \begin{equation}
    \sum_{x\in V_0^*} p_{\psi}(x) 
      = \sum_{x\in V_0^*} \f{\Tr\p{\psi W_x}^2}{2^n}
      \ge \f{C_1C_2\gamma^{112}}{1+C_1\gamma^{55}}
  \end{equation}
  Following \Cref{prop:main-lowerbound}, we conclude that
  \begin{equation}
    \mathcal{F}_S(\ket{\psi}) 
      \ge \f{C_1C_2\gamma^{112}}{1+C_1\gamma^{55}} 
      \ge \f{C_1C_2\gamma^{112}}{1+C_1} 
      = C\gamma^{112},
 \end{equation}
  where $C=\f{C_1C_2}{1+C_1}$.
\end{proof}
From this we can obtain a tolerant testing algorithm. The argument is standard, but we include it for completeness, and to get an explicit value for the polynomial degree in the gap. For the proof of \Cref{theoremone} we need a standard Chernoff bound: 
\begin{lemma}[Chernoff-Hoeffding]\label{chernoffhoeffding}
    Let $X_1,\dots,X_m$ be independent random variables such that $a\leq X_i\leq b$ almost surely for all $i$. Then,
    \[\PP\left(\left|\frac{1}{m}\sum_{i=1}^mX_i-\E\left[\frac{1}{m}\sum_{i=1}^mX_i\right]\right|\geq \delta\right)\leq 2e^{-\frac{2\delta^2m}{(b-a)^2}}\]
\end{lemma}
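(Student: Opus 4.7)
The plan is to prove the Chernoff--Hoeffding inequality by the classical Chernoff method: control the moment generating function (MGF) of the centered sum $S-\E[S]$, where $S=\sum_{i=1}^m X_i$, and then apply Markov's inequality. There is nothing paper-specific here; the bound is a purely probabilistic statement about independent bounded random variables.

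First I would handle the upper tail $\PP(S/m-\E[S/m]\geq \delta)=\PP(S-\E[S]\geq m\delta)$. For any $t>0$, Markov's inequality applied to $e^{t(S-\E[S])}$ yields
\begin{equation}
\PP(S-\E[S]\geq m\delta) \leq e^{-tm\delta}\,\E\!\left[e^{t(S-\E[S])}\right].
\end{equation}
By independence, the MGF factorises as $\prod_{i=1}^m \E[e^{t(X_i-\E X_i)}]$. The next step is to bound each factor by Hoeffding's lemma: for a zero-mean random variable $Y\in[a',b']$, $\E[e^{tY}]\leq \exp\!\bigl(t^2(b'-a')^2/8\bigr)$. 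Applying this with $Y=X_i-\E X_i$ (which lies in an interval of length $b-a$), one gets $\E[e^{t(S-\E[S])}]\leq \exp\!\bigl(mt^2(b-a)^2/8\bigr)$, and hence $\PP(S-\E[S]\geq m\delta)\leq \exp\!\bigl(-tm\delta+mt^2(b-a)^2/8\bigr)$. Optimising over $t>0$ (minimum at $t^\star = 4\delta/(b-a)^2$) yields the bound $\exp\!\bigl(-2m\delta^2/(b-a)^2\bigr)$. The lower tail $\PP(S-\E[S]\leq -m\delta)$ is symmetric (replace $X_i$ by $-X_i$), and a union bound accounts for the factor $2$ in the statement.

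The only non-routine step is Hoeffding's lemma itself, which is where I would expect to spend the most care. The cleanest derivation uses convexity: for $Y\in[a',b']$ one has $e^{tY}\leq \tfrac{b'-Y}{b'-a'}e^{ta'}+\tfrac{Y-a'}{b'-a'}e^{tb'}$; taking expectations (using $\E[Y]=0$) reduces the problem to proving that the log of a certain one-parameter function $\varphi(t)$ is bounded by $t^2(b'-a')^2/8$. This follows by computing $\varphi(0)=\varphi'(0)=0$ and bounding $\varphi''(t)\leq (b'-a')^2/4$ by an AM--GM argument on the two point masses, and then invoking Taylor's theorem. Everything outside of this lemma is bookkeeping, so the proposal is essentially: quote Hoeffding's lemma, then run the Chernoff method and symmetrise.
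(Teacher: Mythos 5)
Your proof is correct: the Chernoff method with Hoeffding's lemma, the optimization at $t^\star = 4\delta/(b-a)^2$ giving the exponent $-2m\delta^2/(b-a)^2$, and the union bound over the two tails producing the factor $2$ are all exactly right. Note that the paper does not prove this lemma at all --- it simply quotes it as a standard Chernoff--Hoeffding bound before the proof of Theorem~\ref{theoremone} --- so your write-up supplies the standard textbook argument that the paper implicitly relies on, and there is nothing to reconcile between the two.
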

\theoremone*
\begin{proof}
  Let $\gamma=\E_{x\sim q_\psi}\lrb{|\langle \psi|W_x|\psi\rangle|^2}$. We choose $D_1,D_2$ such that
    \begin{equation}
    \begin{split}
        D_1:=&\ \epsilon_1^6,\\
        D_2:=&\left(\frac{\epsilon_2}{C}\right)^{1/112},
    \end{split}
    \end{equation}
    where $C$ is the constant from \Cref{fidelitysandwich}.
    We define $D:=\frac{D_1+D_2}{2}$.
    The algorithm now does the following:
    We perform Bell sampling $m$ times, obtaining samples $x_1,\dots,x_m$ to compute an estimate $\overline{\gamma}$ of the expectation value.
    The $x_1,\dots,x_m$ are Bernoulli distributed with mean $p_{\text{accept}}=\frac{1}{2}+\frac{1}{2}\gamma$ (see \Cref{paccept}). We will set our estimate $\overline{\gamma}$ to be
    \begin{equation}
    \overline{\gamma}=\frac{1}{m}\sum_{i=1}^m\left(2x_i-1\right).
    \end{equation}
    Note that $\E[\overline{\gamma}]=\gamma$. We make the following decisions based on this data:
    \begin{itemize}
        \item If $\overline{\gamma}\geq D$, we output $\mathcal{F}_S(\ket{\psi})\geq \epsilon_1$,
        \item If $\overline{\gamma}< D$, we output $\mathcal{F}_S(\ket{\psi})\leq \epsilon_2$.
    \end{itemize}
        From the promise we know that either $\mathcal{F}_S(|\psi\rangle)\geq \epsilon_1$, or that $\mathcal{F}_S(|\psi\rangle)\leq \epsilon_2$.
    Suppose that (1) $\mathcal{F}_S(|\psi\rangle)\geq \epsilon_1$, then by \Cref{fidelitysandwich}:
        \begin{equation}
        \gamma^{1/6}\geq \mathcal{F}_S(|\psi\rangle)\geq \epsilon_1.
    \end{equation}
    Thus $\gamma \geq D_1$.
    On the other hand, if (2) $\mathcal{F}_S(|\psi\rangle)\leq \epsilon_2$, then, again by \Cref{fidelitysandwich},
        \begin{equation}
        C\gamma^{112}\leq \mathcal{F}_S(|\psi\rangle)\leq \epsilon_2.
    \end{equation}
    Thus $\gamma \leq D_2$.
    This tells us that 
    \begin{equation}
        \gamma\leq D_2\text{ or }D_1\leq \gamma.
    \end{equation}
    Now suppose that $\mathcal{F}_S(\ket{\psi})\geq \epsilon_1$, but $\overline{\gamma}<D$. Then the true value of $\gamma$ is greater than or equal to $D_1$. Therefore, the difference $|\overline{\gamma}-\gamma|\geq \alpha$, where we define $\alpha=\frac{D_1-D_2}{3}$.
    We set $C'=\frac{C}{2^{112}}$. By assumption we have that $\epsilon_2\leq \frac{C}{2^{112}}\epsilon_1^{672} $. Equivalently, $\left(\frac{\epsilon_2}{C}\right)^{1/112}\leq\frac{1}{2}\epsilon_1^6 $, so $D_2\leq \frac{1}{2}D_1$. Therefore, it holds that $D_1-D_2\geq \frac{1}{2} D_1$, so $\alpha \geq \frac{1}{6}D_1$.
    Similarly, suppose that $\mathcal{F}_S(\ket{\psi})\leq \epsilon_2$, but $\overline{\gamma}\geq D$. Then the true value of $\gamma$ is less than or equal to $D_2$. Therefore, the difference $|\overline{\gamma}-\gamma|\geq \alpha$.
    So using the a Chernoff-Hoeffding bound (\Cref{chernoffhoeffding}) with $a=-1,b=1$, we can upper bound the probability that the algorithm makes a mistake by
    \begin{equation}
        \PP\left(|\overline{\gamma}-\gamma|\geq \alpha\right)\leq 2e^{-\frac{1}{2}\alpha^2 m}\leq2e^{-\frac{1}{72}\epsilon_1^{12} m} .
    \end{equation}
    Now assuming that $m\geq \frac{\log(3)72}{\epsilon_1^{12}}$, this probability is lower than $1/3$. 
\end{proof}
\paragraph{Acknowledgements}
We would like to thank the Qusoft testing \& learning reading group: Jop Bri{\"e}t, Amira Abbas, Yanlin Chen, Niels Neumann, Davi Castro-Silva \& Jeroen Zuiddam, without whom this paper would not have been written. We also thank Krystal Guo and M\={a}ris Ozols for teaching us about symplectic graphs. JH would like to thank Johannes Jakob Meyer for pointing out reference \cite{anshuetz2024strongly}, and Carlos de Gois for alerting us to his work \cite{de2023uncertainty} after we put out a first version of this paper. JH acknowledges funding from the Dutch Research Council (NWO) through a Veni grant (grant No.VI.Veni.222.331) and the Quantum Software Consortium (NWO Zwaartekracht Grant No.024.003.037). ZB is supported by the Dutch Ministry of Economic Affairs and Climate Policy (EZK), as part of the Quantum Delta NL programme (KAT2).

\bibliographystyle{alpha}
\bibliography{paper}

\appendix

\section{An improved bound on the polynomial degree}

In this section we present an improvement to the degree of the polynomial gap of the tester, beyond what can be achieved by directly relying on the work of \cite{arunachalam2024tolerant} (summarized below \Cref{thm:tol_test_main}). We will almost entirely retrace their argument but obtain various parametric improvements and simplifications. The main bottleneck in obtaining a lower degree polynomial relationship is the black box use of theorems from additive combinatorics. It is unclear how this can be fully avoided.

\subsection{Obtaining a large nearly linear subset}
The main contribution of this appendix is a sharply refined version of Theorem 4.5 of \cite{arunachalam2024tolerant}, which shows the existence of a large set $S \subseteq \mathbb{F}_2^{2n}$ that is nearly linear (in a precise probabilistic sense) and has large point-wise probability mass under the characteristic distribution $p_{\psi}$. We believe that our version of this result is optimal (at least with respect to its scaling with $\gamma$). We note that the techniques in this theorem could also be used to parametrically improve the landmark result on low degree testing from Samorodnitsky\cite{samorodnitsky2007low}. We have the following:
\begin{theorem}\label{thm:refined-Theorem4.5}
Let $|\psi\rangle$ be an $n$-qubit quantum state. 
Then if $\E_{x\sim q_{\psi}}\lrb{\abs*{\langle 
  \psi|W_x|\psi\rangle}^2}\geq \gamma$, 
  and $2^n\ge \f{C''\ln\p{C'''/\gamma}}{\gamma^3}$ with $C''$ and $C''$
  some constants,
  there exists a set $S\subseteq \F_2^{2n}$ such that
\begin{enumerate}
    \item $\abs*{S}\ge \frac{\gamma}{2}2^n$,
    \item For all $x\in S$, $2^np_{\psi}(x)\geq \f{\gamma}{4}$,
    \item $\PP_{s,s'\in S}\lrb{s+s'\in S}\geq \f{\gamma}{6}$.
\end{enumerate}
\end{theorem}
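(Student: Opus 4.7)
The plan is to take $S$ to be the obvious heavy set $T := \{x \in \mathbb{F}_2^{2n} : 2^n p_\psi(x) \geq \gamma/4\}$, so that property (2) is automatic by construction. Writing $\mu(x) := 2^n p_\psi(x) \in [0,1]$, I would first repackage the two available inputs in clean form. From the third bullet of Proposition~\ref{prop:property-ppsi}, combined with the hypothesis, $\sum_x \mu(x)^2 = 2^n\, \mathbb{E}_{x\sim p_\psi}[\mu(x)] \geq 2^n \gamma$. And from the hypothesis $\mathbb{E}_{x\sim q_\psi}[\mu] \geq \gamma$, using $q_\psi = 4^{-n}\, \mu \ast \mu$, one has
\[ \sum_{\substack{x,y,z\\ x+y+z=0}} \mu(x)\mu(y)\mu(z) \;\geq\; \gamma\cdot 4^n. \]
With these two inequalities in hand, properties (1) and (3) fall out by parallel truncation arguments at the threshold $\gamma/4$.

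For property (1), I would split the first inequality at the threshold. The contribution from $x \notin T$ is at most $(\gamma/4)\sum_x \mu(x) = (\gamma/4)\cdot 2^n$, so $\sum_{x \in T} \mu(x)^2 \geq (3\gamma/4)\cdot 2^n$. Since $\mu \leq 1$ pointwise, this yields $|T| \geq (3\gamma/4)\cdot 2^n$, strictly better than the stated bound $(\gamma/2)\cdot 2^n$.

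For property (3), I would do the analogous truncation on the triple sum. If any one of $\mu(x), \mu(y), \mu(z)$ is below $\gamma/4$, bounding that factor by $\gamma/4$ and the other two by their marginals gives a contribution of at most $(\gamma/4)\bigl(\sum_x \mu(x)\bigr)^2 = (\gamma/4)\cdot 4^n$. A union bound over the three coordinates caps the "bad" mass at $(3\gamma/4)\cdot 4^n$, leaving at least $(\gamma/4)\cdot 4^n$ additive triples with all three coordinates in $T$. Dividing by $|T|^2 \leq 4^n$ gives $\mathbb{P}_{s,s' \in T}[s+s' \in T] \geq \gamma/4 \geq \gamma/6$, as required.

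As for the main obstacle: honestly I do not see one along this route — each of (1), (2), (3) follows by a one-line truncation, and the constants obtained are uniformly sharper than those stated. The auxiliary hypothesis $2^n \gtrsim \log(1/\gamma)/\gamma^3$ appears unnecessary for the deterministic approach above, and is presumably a vestige of a more intricate sampling-based argument inherited from \cite{arunachalam2024tolerant}; in that framework the genuinely hard step is controlling concentration errors between $p_\psi, q_\psi$ and their empirical counterparts sharply enough not to lose additional powers of $\gamma$, but the direct weighted-counting view above sidesteps those altogether.
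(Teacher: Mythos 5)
Your proposed set $T = \{x : 2^n p_\psi(x) \ge \gamma/4\}$ does give properties (1) and (2) by the truncation arguments you describe, but the argument for property (3) contains a quantitative error that sinks it. You write ``Dividing by $|T|^2 \le 4^n$,'' i.e.\ you are claiming $|T| \le 2^n$. This is false in general. The only upper bound on $|T|$ available from the hypotheses is obtained from $\sum_x p_\psi(x) = 1$ together with $p_\psi(x) \ge 2^{-n}\gamma/4$ for $x \in T$, which yields $|T| \le (4/\gamma)\,2^n$. Your weighted-triple truncation correctly shows that the number of additive triples inside $T$ is at least $(\gamma/4)\,4^n$, but dividing by the actual bound $|T|^2 \le (16/\gamma^2)\,4^n$ gives only $\PP_{s,s'\in T}[s+s'\in T] \ge \gamma^3/64$, which is far weaker than the required $\gamma/6$ and would cascade into a much worse exponent after BSG and Friman--Ruzsa are applied. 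Note that your own lower bound $|T| \ge (3\gamma/4)\,2^n$ together with the possible upper bound $(4/\gamma)\,2^n$ shows a genuine $\gamma^{-2}$ slack in $|T|^2$, so this is not a matter of being sloppy with constants.

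This is precisely the gap that the sampling construction in the paper is designed to close. The paper takes $X$ to be exactly your set $T$ and then chooses $S$ as a \emph{random} subset of $X$, including each $x$ independently with probability $\mu(x) = 2^n p_\psi(x)$. The crucial point is that $\E|S| = \sum_{x\in X}\mu(x) \le \sum_{x\in \F_2^{2n}}\mu(x) = 2^n$, so the random reweighting caps the expected size at $2^n$ regardless of $\gamma$, while the expected number of additive triples in $S$ is still $\sum_{x,y,x+y\in X}\mu(x)\mu(y)\mu(x+y) \gtrsim \gamma\,4^n$ --- exactly the weighted count you already computed. The deterministic set $T$ cannot achieve both simultaneously. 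The hypothesis $2^n \gtrsim \gamma^{-3}\log(1/\gamma)$, which you dismissed as a vestige, is then used in a Hoeffding/Chernoff concentration step to ensure $|S|$ stays close to its mean, so that the ratio defining $L(S)$ can be controlled; it is not an artifact you can drop.
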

\begin{proof}
We begin by defining the set:
\begin{equation}
    X=\lrs{x\in \F_2^{2n}:2^np_{\psi}(x)\geq \frac{\gamma}{4}}.
\end{equation}
We prove that this set is large, and then construct $S$ as a random subset of $X$. 
It is clear that $\f{\gamma}{4}\abs*{X}\le \sum_{x\in X}2^np_\psi(x)\le \sum_{x\in \Fnn}2^np_\psi(x) = 2^n$. This implies that $\abs*{X}\le \f{4}{\gamma}2^n$.
Furthermore, we can lower bound $|X|$ using Markov's inequality:
\begin{equation}
  \gamma 
    \leq \E_{x\sim q_\psi}\lrb{2^np_{\psi}(x)}
    \leq \PP_{x\sim q_\psi}\lrb{2^np_\psi(x)\geq \f{\gamma}{4}}+\f{\gamma}{4}.
\end{equation}
Since $q_{\psi}(x)\leq 2^{-n}$ for all $x$, we have
\begin{equation}
  \PP_{x\sim q_\psi}\lrb{2^np_\psi(x) 
    \geq \f{\gamma}{4}}
    = \sum_{x\in X}q_{\psi}(x)\leq |X|2^{-n}.
\end{equation}
Combining the above we obtain a lower bound:
\begin{equation}
  \abs*{X} \ge \f{3}{4}\gamma \cdot 2^n.
\end{equation}
Next we construct the set $S$. This is done through a sampling process on $X$. We take every $x \in X$, and independently randomly decide to include $x\in S$ with probability $2^np_{\psi}(x)$. We will now prove that $S$ satisfies the three properties of the theorem statement with non-zero probability. Note that the second property is automatically satisfied, since $S\subseteq X$. \\

The rest of the argument is probabilistic (over the random choice of $S$). We begin by calculating the expected size of $S$. It's easy to see that 
\begin{equation}
\E_S\lrb{\abs*{S}}=\sum_{x\in X}2^np_\psi\p{x}
  =2^n\PP_{x\sim p_\psi}\lrb{x\in X}=2^n\PP_{x\sim p_\psi}
  \lrb{2^np_\psi(x)\ge \f{\gamma}{4}}.
  \end{equation}
Furthermore, from \Cref{prop:property-ppsi} we know that 
  $\E_{x\sim q_\psi}\lrb{2^np_{\psi}(x)} \le \E_{x\sim p_\psi}\lrb{2^np_{\psi}(x)}$.
According to Markov's inequality, we have
\begin{equation}
  \gamma
    \le \E_{x\sim q_\psi}\lrb{2^np_{\psi}(x)}
    \le \E_{x\sim p_\psi}\lrb{2^np_{\psi}(x)}
    \le \PP_{x\sim p_\psi}\lrb{2^np_\psi(x)\geq \f{\gamma}{4}}+\f{\gamma}{4}.
\end{equation}
Therefore $\E_S\lrb{\abs*{S}}\ge \f{3}{4}\gamma \cdot 2^n$.
Next we compute the expected ``linearity" of $S$. In this argument we will need an upper bound on the size of $S$, which we will indirectly obtain from its expected size. Let $\lambda$ be a constant, which we will choose later, and define $A$ to be the event $\{\abs*{S} < \p{1+\lambda}\E_S\lrb{\abs*{S}}\}$.\\

\noindent Now we can compute the linearity. Define 
\begin{equation}
L(S)=\PP_{s,s'\in S}\lrb{s+s'\in S}.
\end{equation}
We calculate the expected linearity $\E_S\lrb{L(S)}$ as follows:
\begin{align}
  \E_S\lrb{L(S)} &= \E_S\lrb{\abs*{S}^{-2}\sum_{x,y\in S}\1[x+y\in S]} \\ 
    &= \E_S\lrb{\abs*{S}^{-2}\sum_{x,y\in X}\1[x\in S]\1[y\in S]\1[x+y\in S]} \\ 
    &\ge \E_S\lrb{\abs*{S}^{-2}\sum_{x,y\in X} 
      \1[x\in S]\1[y\in S]\1[x+y\in S] \1[A]} \\
    &\ge \f{2^{-2n}}{\p{1+\lambda}^2} 
      \E_S\lrb{\sum_{x,y\in X} \1[x\in S]\1[y\in S]\1[x+y\in S] \1[A]} \label{eq:S}
\end{align}
where the last inequality follows from $\abs*{S}\le \p{1+\lambda}\E_S\lrb{\abs*{S}}$
  and $\E_S\lrb{\abs*{S}}=2^n\PP_{x\sim p_\psi}\lrb{x\in X}\le 2^n$. We can continue this calculation:
\begin{align}
 &\f{2^{-2n}}{\p{1+\lambda}^2} 
      \E_S\lrb{\sum_{x,y\in X} \1[x\in S]\1[y\in S]\1[x+y\in S] \1[A]} \notag\\
    &\hspace{8em}= \f{2^{-2n}}{\p{1+\lambda}^2} \sum_{x,y\in X}
      \E_S\lrb{\1[x\in S]\1[y\in S]\1[x+y\in S]\1[A]}\\
    &\hspace{8em}\ge \f{2^{-2n}}{\p{1+\lambda}^2} \sum_{x,y\in X} \p{
      \E_S\lrb{\1[x\in S]\1[y\in S]\1[x+y\in S]} - \PP_S\lrb{\bar{A}} }\\
    &\hspace{8em}=  \f{2^{-2n}}{\p{1+\lambda}^2} \p{\sum_{x,y\in X}
        \E_S\lrb{\1[x\in S]\1[y\in S]\1[x+y\in S]}
        - \abs*{X}^2\PP_S\lrb{\bar{A}}\label{eq:last}
      }.
\end{align}
Let us now compute $\PP_S\lrb{\bar{A}}$.
From  Hoeffding's inequality, and the bounds 
  $\E_S\lrb{\abs*{S}}\ge \f{3}{4}\gamma \cdot 2^n$ and $\abs*{X}\le \f{4}{\gamma}2^n$,
  we have that
\begin{align}
  \PP_S\lrb{\bar{A}}
    = \PP_S\lrb{\abs{S} \ge \E_S\lrb{\abs*{S}} + \lambda \E_S\lrb{\abs*{S}}}
    \le \exp{-\f{2\lambda^2 \p{\E_S\lrb{\abs*{S}}}^2}{\abs*{X}}}
    &\le \exp{-\f{2\lambda^2 \f{9}{16}\gamma^2 \cdot 2^{2n}}{\f{4}{\gamma}2^n}} \\
    &= \exp{-9\lambda^2\gamma^3 2^{n-5}}.
\end{align}

Next we address the other term in \Cref{eq:last}:
\begin{align}
  \sum_{x,y\in X} \E_S\big[\1[x\in S]\1[y\in S]\1[x+y\in S]\big]
    &= \sum_{\substack{x,y\in X;\\ x+y\in X}}
      \E_S\big[\1[x\in S]\1[y\in S]\1[x+y\in S]\big] \\
    &\ge \sum_{\substack{x,y\in X;\\ x+y\in X}}2^{3n}p_\psi(x)p_\psi(y)p_\psi(x+y) \\
    &= 2^{2n} \sum_{\substack{x,y\in X;\\ x+y\in X}}2^{n}p_\psi(x)p_\psi(y)p_\psi(x+y).\label{eq:sumX}
\end{align}
To evaluate this last sum, we relate it to the related sum over $\mathbb{F}_2^{2n}$ (as opposed to $X$).  Note that we have 
\begin{equation}
\sum_{x,y\in \Fnn}2^n p_\psi(x)p_\psi(y)p_\psi(x+y)=\E_{x\sim q_{\psi}}\lrb{\abs*{\langle \psi|W_x|\psi\rangle}^2}\geq \gamma.
\end{equation}
Relating this sum to the summation in \Cref{eq:sumX} requires bounding three residual sums:
\begin{align}
  \sum_{x,y\in \Fnn; x\notin X} 2^n p_\psi(x)p_\psi(y)p_\psi(x+y) 
    &\le \f{\gamma}{4} \sum_{x,y\in \Fnn} p_\psi(y)p_\psi(x+y) = \f{\gamma}{4}, \\
  \sum_{x,y\in \Fnn; y\notin X} 2^n p_\psi(x)p_\psi(y)p_\psi(x+y) 
    &\le \f{\gamma}{4} \sum_{x,y\in \Fnn} p_\psi(x)p_\psi(x+y) = \f{\gamma}{4},\\
  \sum_{x,y\in \Fnn; x+y\notin X} 2^n p_\psi(x)p_\psi(y)p_\psi(x+y) 
    &\le \f{\gamma}{4} \sum_{x,y\in \Fnn} p_\psi(x)p_\psi(y) = \f{\gamma}{4},
\end{align}
Combining these and working out we obtain the lower bound:
\begin{equation}
  \E_S\lrb{L(S)} 
    \ge \f{1}{\p{1+\lambda}^2} \p{\gamma - \f{3}{4}\gamma - 
      2^{-2n}\abs*{X}^2e^{-9\lambda^2 \gamma^3 2^{n-5}}}
    \ge \f{1}{\p{1+\lambda}^2} \p{\f{1}{4}\gamma - 
      \f{16}{\gamma^2}e^{-9\lambda^2 \gamma^3 2^{n-5}}}.
\end{equation}
Fix $\lambda = 0.01$, for $n$ such that $2^n\ge \f{320000}{9}\cdot \f{3\ln\p{1/\gamma}+\ln 1600}{\gamma^3}
  = \f{C''\ln\p{C'''/\gamma}}{\gamma^3}$ with $C'$ and $C''$ some 
  constants, we have $\E_S\lrb{L(S)} \ge \f{\gamma}{5}$.
Using the Chernoff bound, Markov's inequality, and the union bound, we conclude that there exists $S$,
  such that $\abs*{S}\ge \f{\gamma}{2}\cdot 2^n$ and 
  $\PP_{s,s'\in S}\lrb{s+s'\in S}\geq \f{\gamma}{6}$.
\end{proof}

\subsection{Proof of \texorpdfstring{\Cref{thm:tol_test_main}}{theorem 19}}

Using \Cref{thm:refined-Theorem4.5} we can prove \Cref{thm:tol_test_main}. This argument is exactly that of \cite{arunachalam2024tolerant}, chaining together the Balog-Szemeredi-Gowers and polynomial Freiman-Rusza theorems to obtain a subspace $V\subseteq \mathbb{F}_2^{2n}$ with appropriate parameters. We include it here in the interest of presenting a self-contained narrative. 

\toltestmain*

\begin{proof}
We use \Cref{thm:refined-Theorem4.5} to guarantee the existence of a set $S$ (with the properties given in \Cref{thm:refined-Theorem4.5}. As we disussed there, this set is nearly linear. We can apply the Balog-Szemeredi-Gowers theorem to $S$, to give us another set $S'$ which is of similar size and has \emph{small doubling} (meaning the set $S'+S' := \left\{s+\hat{s}\;:\: s,\hat{s}\in S'\right\}$ is not much bigger than $S'$. Concretely the BSG theorem (adapted from \cite{viola2011selected}) is stated as follows:
\begin{restatable}[Balog-Szemeredi-Gowers, modified form
  \cite{balog1994statistical, gowers2001new, JopLectureNote}]{theorem}{BSG}
  Let $S\subseteq \F_2^{2n}$ such that $\PP_{s,s'\in S}\lrb{s+s'\in S}\geq \epsilon$, 
  then there exists a subset $S'\subseteq S$, such that
  \begin{enumerate}
    \item $|S'|\geq \frac{\epsilon}{2\sqrt{2}}\abs*{S}$,
    \item $\frac{\abs*{S'+S'}}{\abs*{S'}}\leq 8\p{\f{1}{\vep}}^6$.
  \end{enumerate}
\end{restatable}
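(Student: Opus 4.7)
The plan is to encode the additive hypothesis as a graph-theoretic one, apply a graph-theoretic BSG lemma, and translate back through a path-counting argument, making crucial use of the characteristic-$2$ arithmetic of $\mathbb{F}_2^{2n}$. First, I would define the graph $\mathcal{G}$ on vertex set $S$ by declaring $\{s,s'\}$ to be an edge iff $s+s'\in S$; the hypothesis becomes $|E(\mathcal{G})|\ge \tfrac{\epsilon}{2}|S|^2$, so $\mathcal{G}$ has edge density at least $\epsilon$.

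Next, I would invoke the standard graph-theoretic BSG lemma in the form given in the lecture notes cited in the paper: from a graph of density $\ge \epsilon$ on $|S|$ vertices, one can extract a subset $S'\subseteq S$ with $|S'|\ge \tfrac{\epsilon}{2\sqrt{2}}|S|$ such that every pair $(a,b)\in S'\times S'$ is joined by many paths of length $3$ in $\mathcal{G}$, quantitatively at least $c\,\epsilon^{5}|S|^{2}$ for an absolute constant $c$. This yields item (1) of the theorem immediately. The graph lemma itself is proved by a dependent random sampling argument: pick a random edge, restrict attention to the common neighborhood of its endpoints, and apply Markov's inequality to cull the pairs with too few connecting paths.

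For item (2), I exploit $\mathbb{F}_2$-arithmetic: any length-$3$ path $a-u-v-b$ in $\mathcal{G}$ is witnessed by three elements $a+u,\ u+v,\ v+b\in S$ whose sum equals $a+b$, placing $a+b$ in $S+S+S$. Distinct intermediate pairs $(u,v)$ yield distinct triples in $S^{3}$ (since $u$ is recovered from $a+u$), so each element of $S'+S'$ has at least $c\,\epsilon^{5}|S|^{2}$ preimages under the sum map on $S^{3}$. This gives $|S'+S'|\le |S|/(c\,\epsilon^{5})$, and dividing by the lower bound on $|S'|$ produces a doubling constant of the form $O(\epsilon^{-6})$. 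The main technical obstacle will be tuning the constants in the graph BSG step to recover the exact factor $8$ in the target bound $8\epsilon^{-6}$; since different standard formulations produce slightly different prefactors, the cleanest route is to quote a quantitatively optimized version from the cited notes rather than re-derive it from scratch.
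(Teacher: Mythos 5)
Your proposal takes essentially the same route as the paper's proof: both run a dependent‑random‑choice argument to find a dense well‑connected subset, then bound $\abs*{S'+S'}$ by counting how many tuples in a power of $S$ map to each element of $S'+S'$ under the additive sum map. The packaging differs only cosmetically: you phrase step one as a black‑box graph BSG lemma on the graph with edge set $\{s,s'\}: s+s'\in S$, whereas the paper inlines that argument using the popularity count $r(x)=\abs*{\{(a,b)\in S^2: a+b=x\}}$, a random translate $B=S\cap(S+Z)$, and a bipartite ``popular‑edge'' graph on $B$; and you count triples in $S^3$ (upper bound $\abs*{S}^3$) against at least $c\epsilon^5\abs*{S}^2$ length‑$3$ paths per pair, whereas the paper counts quadruples in $S^4$ against $\abs*{S}^4$, which is the same inequality after rearrangement. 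Both routes yield $\abs*{S'}\gtrsim\epsilon\abs*{S}$ and a doubling bound of order $\epsilon^{-6}$, and your recognition that in characteristic $2$ a length‑$3$ path $a\!-\!u\!-\!v\!-\!b$ encodes an injective triple $(a+u,u+v,v+b)\in S^3$ with sum $a+b$ is exactly the mechanism the paper relies on.
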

The proof for this form of BSG theorem is based on \cite{JopLectureNote}, is postponed to \Cref{app:BSG}.
We apply this theorem with $\epsilon=\frac{\gamma}{6}$ to obtain a set $S'$ with $|S'|\geq \frac{\gamma}{18}\abs*{S}$ and $\frac{\abs*{S'+S'}}{\abs*{S'}}\leq \p{\frac{36}{\gamma}}^6$.\\

The next step is to argue that a set of small doubling can be covered by a few translates of a subspace $V$. This is the content of the polynomial Freiman-Ruzsa theorem.
\begin{theorem}[Polynomial Freiman-Ruzsa Theorem
  \cite{gowers2023conjecture,gowers2024marton,liao2024improved}]
If a set $S'\subseteq \F_2^{2n}$ has doubling constant at most $K$, then there exists a subspace $V$ of $\F_2^{2n}$ with the follwing properties:
\begin{enumerate}
    \item $|V|\leq |S'|$,
    \item $S'$ is covered by at most $(2K)^9$ translates of $V$.
\end{enumerate}
\end{theorem}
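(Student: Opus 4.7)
The plan is to invoke the recent resolution of the polynomial Freiman-Ruzsa conjecture in $\F_2^n$ by Gowers, Green, Manners, and Tao, with the sharpened exponent due to Liao. From our perspective the statement is a black box, but it is worth sketching its structure since it is the single deepest ingredient in our argument. We cannot hope to reprove PFR in a few paragraphs; the task is to describe the overall route and flag where the difficulty lies.

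The proof is carried out via an entropic reformulation. For independent $\Fnn$-valued random variables $X, Y$, define the entropic Ruzsa distance $d(X;Y) = H(X + Y) - \tfrac{1}{2} H(X) - \tfrac{1}{2} H(Y)$. A short Pl\"unnecke-Ruzsa style computation shows that if $S'$ has doubling at most $K$ and $U_{S'}$ denotes the uniform distribution on $S'$, then $d(U_{S'}; U_{S'}) \le \log K$. The entropic PFR theorem then asserts that whenever $d(X; X') \le \log K$ for independent copies of $X$, there exists a subgroup $V \subseteq \Fnn$ such that the distribution of $X$ is concentrated on at most $\poly(K)$ cosets of $V$, with $|V|$ comparable to $2^{H(X)}$.

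The heart of the GGMT argument is an entropy decrement / compression step: given a pair of random variables $(X_1, X_2)$ that are far from being uniform on a coset of a subgroup, one constructs a new pair whose joint entropy is strictly smaller while preserving the Ruzsa distance up to controlled loss. Iterating and extracting a limiting object produces the required subgroup. Converting the entropic statement back to the combinatorial one (a covering by cosets of an actual subspace $V$ with $|V| \le |S'|$ and the $(2K)^9$ covering bound) is then a Ruzsa-covering-lemma calculation combined with Pl\"unnecke-Ruzsa to control $|S' + V|/|V|$; the numerical exponent $9$ is the outcome of Liao's quantitative refinement of the GGMT estimates.

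The hard part, unsurprisingly, is the entropic compression argument: this is the genuinely new ingredient that distinguishes polynomial from exponential Freiman-Ruzsa, and cannot be avoided by classical additive combinatorics. Everything else, including the translation between entropic and combinatorial statements and the extraction of the final exponent, is careful bookkeeping through standard inequalities. For our application we simply cite the result as stated, since even an optimal exponent in PFR would improve the overall polynomial degree in \Cref{theoremone} only within a constant factor and would not change the qualitative picture.
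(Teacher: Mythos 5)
The paper likewise treats PFR as a black box, citing Gowers--Green--Manners--Tao and Liao without giving a proof, so your approach matches. Your sketch of the entropic Ruzsa-distance reformulation and the compression step is accurate background, but the operative content on both sides is simply the citation.
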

We apply the Polynomial Freiman-Ruzsa theorem to $S'$, with $K=\left(\frac{36}{\gamma}\right)^6$, yielding a subspace $V\subseteq \mathbb{F}_2^{2n}$. Now by the pigeonhole principle, there must exist a translate of $V$, say $V+y$, that contains at least $\frac{1}{(2K)^8}|S'|$ members of $S'$. Each element $x\in S'$ has $p_{\psi}(x)\geq 2^{-n}\f{\gamma}{4}$, so 
\begin{equation}
  \frac{1}{|V+y|}\sum_{x\in V+y}p_{\psi}(x)\geq \frac{1}{|V|}\sum_{x\in (V+y)\cap S'}2^{-n}\gamma/4\geq \frac{1}{|V|}(2K)^{-9}|S'|2^{-n}\f{\gamma}{4}\geq (2K)^{-9}2^{-n}\f{\gamma}{4}
\end{equation}
So we have
\begin{equation}
\E_{x\in V+y}p_{\psi}(x)\geq (2K)^{-9}2^{-n}\f{\gamma}{4} = C_1\gamma^{55}2^{-n},
\end{equation}
for some constant $C_1$.
Through a clever little argument in \cite{arunachalam2024tolerant} (Corollary 4.11), we have that this value is even larger on $V$: 
\begin{equation}
\E_{x\in V}p_{\psi}(x)\geq\E_{x\in V+y}p_{\psi}(x)\geq C_1\gamma^{55}2^{-n}.
\end{equation}
If we rewrite this equation we get
\begin{equation}
    \sum_{x\in V} \abs*{\bra{\psi}W_x\ket{\psi}}^2 \geq C_1 \gamma^{55} \abs*{V}.
\end{equation}
Besides,
\begin{equation}
  \sum_{x\in V} \abs*{\bra{\psi}W_x\ket{\psi}}^2 
    \ge
      \p{2K}^{-8}\sum_{x\in S'} 
        \abs*{\bra{\psi}W_x\ket{\psi}}^2
    \ge
      \p{2K}^{-8} \f{\gamma}{4} \abs*{S'}.
\end{equation}
Note that $\abs*{S'}\ge \f{\eps}{3}\cdot \abs*{S} 
  \ge \f{\eps}{3} \cdot \f{\gamma}{2} 2^n$, 
  we conclude that
\begin{equation}
  \sum_{x\in V} \abs*{\bra{\psi}W_x\ket{\psi}}^2 
    \ge \p{2K}^{-9} \f{\gamma}{4}\cdot 
      \f{\eps}{3} \cdot \f{\gamma}{2} 2^n
    = C_2 \gamma^{57} 2^n.
\end{equation}

\end{proof}
\section{Proof of \Cref{prop:main-lowerbound}}\label{appendix:prop:main-lowerbound}
\propmainlowerbound*
\begin{proof}
Let $V$ be Lagrangian subspace of $\F_2^n$. Then let $\{|\phi_\alpha\rangle\}_{\alpha\in \F_2^n}$ be the basis of eigenstates of the matrices $\{W_x:x\in V\}$. For each $\alpha$, $|\phi_\alpha\rangle$ is a stabilizer state. 
Every $x\in V$ corresponds to a unique linear function $f_x:\F_2^n\rightarrow\F_2$ such that
\begin{equation}
    W_x=\sum_\alpha (-1)^{f_x(\alpha)}\ket{\phi_\alpha}\bra{\phi_\alpha}
\end{equation}
Now we have, setting $\psi=\ketbra{\psi}{\psi}$,
\begin{equation}
    \mathcal{F}_S(\ket{\psi})\geq \max_{\alpha} \bra{\phi_\alpha}\psi\ket{\phi_\alpha}=\max_{\alpha} \bra{\phi_\alpha}\psi\ket{\phi_\alpha}\sum_\beta\bra{\phi_\beta}\psi\ket{\phi_\beta}\geq\sum_\beta\bra{\phi_\beta}\psi\ket{\phi_\beta}^2.
\end{equation}
We show that $2^n\sum_\beta\ket{\phi_\beta}\bra{\phi_\beta}\psi\ket{\phi_\beta}\bra{\phi_\beta}=\sum_{x\in V}W_x\psi W_x^\dagger$ :
\begin{equation}
    \begin{split}
        \sum_{x\in V}W_x\phi W_x^\dagger
        =&\sum_{x\in V}\sum_{\alpha,\beta\in \F_2^n}(-1)^{f_x(\alpha)}\ket{\phi_\alpha}\bra{\phi_\alpha}\psi (-1)^{f_x(\beta)}\ket{\phi_\beta}\bra{\phi_\beta}\\
        =&\sum_{x\in V}\sum_{\alpha,\beta\in \F_2^n}(-1)^{f_x(\alpha+\beta)}\ket{\phi_\alpha}\bra{\phi_\alpha}\psi\ket{\phi_\beta}\bra{\phi_\beta}\\
        =&\sum_{\alpha,\beta\in \F_2^n}2^n\mathbbm{1}[\alpha=\beta]\ket{\phi_\alpha}\bra{\phi_\alpha}\psi\ket{\phi_\beta}\bra{\phi_\beta}\\
        =&\sum_{\beta\in \F_2^n}2^n\ket{\phi_\beta}\bra{\phi_\beta}\psi\ket{\phi_\beta}\bra{\phi_\beta}.\\
    \end{split}
\end{equation}

Now
\begin{equation}
    \Tr\left[\left(\frac{1}{2^n}\sum_{x\in V}W_x\psi W_x^\dagger\right)^2\right]=\Tr\left[\left(\sum_{\beta\in \F_2^n}\ket{\phi_\beta}\bra{\phi_\beta}\psi\ket{\phi_\beta}\bra{\phi_\beta}\right)^2\right]=\sum_\beta\bra{\phi_\beta}\psi\ket{\phi_\beta}^2,
\end{equation}
and
\begin{align*}
\Tr\left[\left(\frac{1}{2^n}\sum_{x\in V}W_x\psi W_x^\dagger\right)^2\right]
&=2^{-2n}\Tr\left[\sum_{x,y\in V}\psi W_x^\dagger W_y \psi (W_x^\dagger W_y)^\dagger\right] \\
&=2^{-n}\Tr\left[\sum_{y\in V}\psi W_y \psi  W_y^\dagger\right]
=\sum_{y\in V}p_{\psi}(x).
\end{align*}
\end{proof}

\section{Proof of Modified BSG Theorem}\label{app:BSG}

\BSG*

\begin{proof}
  The proof is based on \cite{JopLectureNote}.
  For $x\in S+S$, let $r(x)=\abs*{\lrs{(a,b)\in S^2: a+b=x}}$.
  Note that $\sum_{x\in S}r(x)= \abs{S}^2\PP_{s,s'\in S}\lrb{s+s'\in S} 
    \ge \vep\abs*{S}^2$.
  We uniformly randomly choose $Z$ from $S$.
  Let $B = S \cap (S+Z)$.
  Let $S=\lrs{(a,b)\in S^2: r(a+b)\le c\vep^2\abs*{S}}$.
  We have 
  \begin{equation}
    \Ex{\abs*{B}} =
      \f{1}{\abs*{S}} \sum_{z\in S} \abs*{S\cap (S+z)} 
    = \f{1}{\abs*{S}} \sum_{z\in S} r(z)
    \ge \vep \abs*{S},
  \end{equation}
  and 
  \begin{align}
    \Ex{\abs*{B^2\cap S}} &= 
      \sum_{(a,b)\in S} \Prob{a\in B\wedge b\in B}  \\
    &= \sum_{(a,b)\in S} \Prob{a\in S+Z\wedge b\in S+Z}  \\
    &\le \sum_{(a,b)\in S} \f{r(a+b)}{\abs*{S}} \\
    &\le \sum_{(a,b)\in S} c\vep^2 \\
    &\le c\vep^2\abs*{S}^2.
  \end{align}
  To conclude, 
  \begin{equation}
    \Ex{2c\abs*{B}^2 - \abs*{B^2\cap S}} \ge c\vep^2 \abs*{S}^2.
  \end{equation}
  Therefore there exists $z$ and $B$, such that $\abs*{B}\ge 
    \f{\vep}{\sqrt{2}}\abs*{S}$ and $\abs*{B^2\cap S}\le 2c\abs*{B}^2$.
  Fix $c=\f{1}{16}$, then we have
    $B$ such that $\abs{B}\ge \f{\vep}{\sqrt{2}}\abs{S}$.
  Besides, let $\Gamma=(B\times B, E)$ be the bipartite graph with 
    edge set 
  \begin{equation}
    E = \lrs{(a,b)\in B^2: r(a+b)\ge \f{\vep^2}{16}\abs{S}}.
  \end{equation}
  Then $\abs{E}\ge \f{7}{8}\abs{B}^2$.
  Let $S'\subseteq B$ be the set of vertices with degree at least
    $\f{3}{4}\abs{B}$. 
  Since $\abs{E} = \sum_{x\in B}\text{deg}(x)$, we have
    $\abs{S'}\ge \f{1}{2}\abs{B} \ge \f{\vep}{2\sqrt{2}}\abs{S}$.
  Furthermore, for every pair $a,b\in S'$, they share at least 
    $\f{1}{2}\abs{B}$ common neighbors. 
  We will show $S'$ has small doubling number as follows.
  For each $u\in S'+S'$, let $(a_u,b_u)\in S'\times S'$ be
    a pair such that $a_u + b_u = u$.
  Then, the numbmer of quadruples $(a, b, c, d)\in S^4$ satisfying
    $(a+b)+(c+d)\in S'-S'$ is at least
  \begin{align}
    \sum_{u\in S'+S'}\sum_{x\in B} r(a_u+x) r(b_u+x) 
      &\ge \abs*{S'+S'}\p{\f{1}{2}\abs{B}}\p{\f{\vep^2}{16}\abs*{S}}^2 \\
    &\ge \f{1}{2\sqrt{2}} \vep^5 \abs*{S'+S'}\abs*{S}^3.
  \end{align}
  Since this number is at most $\abs*{S}^4$, we conclude that
  \begin{equation}
    \abs*{S'+S'} \le 2\sqrt{2}\vep^{-5}\abs{S} 
      \le 8\vep^{-6} \abs*{S'}.
  \end{equation}
\end{proof}

\end{document}